\newtheorem{proposition}{Proposition}
\newtheorem{lemma}{Lemma}
\newcolumntype{L}[1]{>{\RaggedRight\arraybackslash}p{#1}}
\newcolumntype{C}[1]{>{\centering\arraybackslash}p{#1}}
\def\BibTeX{{\rm B\kern-.05em{\sc i\kern-.025em b}\kern-.08em
    T\kern-.1667em\lower.7ex\hbox{E}\kern-.125emX}}
\def\gap{0.53ex}
\begin{document}

\title{SCAN-BEST: Sub-6GHz-Aided Near-field Beam Selection with Formal Reliability Guarantees}
\author{Weicao Deng,~\IEEEmembership{Graduate Student Member,~IEEE},
Binpu Shi,~\IEEEmembership{Graduate Student Member,~IEEE},\\
Min Li,~\IEEEmembership{Member,~IEEE},
and Osvaldo Simeone,~\IEEEmembership{Fellow,~IEEE}
\thanks{
The work of Binpu Shi was supported by a Project Supported by Scientific Research Fund of Zhejiang University (No. XY2024007).
The work of Osvaldo Simeone was supported by the European Research Council (ERC) under the European Union’s Horizon Europe Programme (grant agreement No. 101198347), by an Open Fellowship of the EPSRC (EP/W024101/1), and by the EPSRC project (EP/X011852/1).

Weicao Deng, Binpu Shi, and Min Li are with the College of Information Science and Electronic Engineering and also with Zhejiang Provincial Key Laboratory of Multi-Modal Communication Networks and Intelligent Information Processing, Zhejiang University, Hangzhou 310027, China (e-mail: \{caowd,~bp.shi,~min.li\}@zju.edu.cn). (\emph{Corresponding author: Min Li.})

Osvaldo Simeone is with the Intelligent Networked Systems Institute (INSI), Northeastern University London, One Portsoken Street, London E1 8PH, United Kingdom (email: o.simeone@northeastern.edu). 
}}

\maketitle

\begin{abstract}
As millimeter-wave (mmWave) MIMO systems adopt larger antenna arrays, near-field propagation becomes increasingly prominent, especially for users close to the transmitter. 
Traditional far-field beam training methods become inadequate, while near-field training faces the challenge of large codebooks due to the need to resolve both angular and distance domains. 
To reduce in-band training overhead, prior work has proposed to leverage the spatial-temporal congruence between sub-6 GHz (sub-6G) and mmWave channels to predict the best mmWave beam within a near-field codebook from sub-6G channel estimates.
To cope with the uncertainty caused by sub-6G/mmWave differences, we introduce a novel \textcolor{black}{Sub-6G Channel Aided Near-field BEam SelecTion (SCAN-BEST) framework} that wraps around any beam predictor to produce candidate beam subset with formal suboptimality guarantees.
The proposed SCAN-BEST builds on conformal risk control (CRC), and is calibrated offline using limited calibration data.
Its performance guarantees apply even in the presence of statistical shifts between calibration and deployment.
Numerical results validate the theoretical properties and efficiency of SCAN-BEST.
\end{abstract}

\begin{IEEEkeywords}
Near-field, beam selection, conformal risk control, sub-6G channel, deep learning.
\end{IEEEkeywords}

\section{Introduction}
Millimeter-wave (mmWave) and extremely large-scale massive multiple-input multiple-output (XL-MIMO) are recognized as key enablers for 6G communication systems \cite{8454665,10729214}.
However, their adoption introduces unique challenges due to the high operating frequencies and to the large antenna array sizes, which result in a substantial increase in the Rayleigh distance—from a few meters to several hundred meters \cite{cui2022near}. 
This significantly expands the near-field region, where electromagnetic waves must be accurately modeled using spherical wavefronts rather than planar approximations \cite{10683443}. 
Communication quality in this regime becomes highly susceptible to blockages \cite{liu2024near}, presenting significant challenges for channel state information (CSI) acquisition and beamforming design.

\par This work explores the idea of using sub-6 GHz (sub-6G) channel data to enhance mmWave communication.
As illustrated in Fig.~\ref{temporal spatial congruence}, sub-6G and mmWave channels exhibit similar power delay profiles (PDPs) \cite{samimi20163,10292615}.
This congruence suggests the potential to extract both angular and distance information from sub-6G channels to facilitate near-field beam selection.
Nonetheless, significant challenges remain due to the inherent differences between sub-6G and mmWave channels, such as variations in path parameters and discrepancies in angular and temporal resolutions. 
These differences introduce uncertainties when mapping sub-6G channel data to mmWave beams, as illustrated by the dominant path differences in Fig.~\ref{temporal spatial congruence:c}. 
Additionally, ensuring guaranteed performance for near-field beam selection remains an open research question and a critical challenge.

\subsection{Related Works and Motivations} To address these challenges, a significant number of works \cite{cuiChannelEstimationExtremely2022,zhangFastNearFieldBeam2022,10239282,10163797,9903646} have focused on the angle and distance characteristics of near-field channels, proposing various beam training approaches.
For instance, reference \cite{cuiChannelEstimationExtremely2022} introduced on-grid/off-grid polar-domain simultaneous orthogonal matching pursuit (OMP) channel estimation methods that exploit the polar-domain sparsity of near-field channels. 
Reference \cite{zhangFastNearFieldBeam2022} developed a two-phase beam training scheme that divides the two-dimensional search in the polar-domain codebook into two sequential angular and distance domain searches. 
Similarly, reference \cite{10239282} proposed a two-stage hierarchical beam training method.
In the first stage, the central sub-array is used to perform a coarse search for the user direction in the angular domain. 
In the second stage, a fine-grained search for the user’s direction and distance is conducted in the polar domain.

\par Inspired by the integration of wireless communication and artificial intelligence (AI) \cite{10431795}, reference \cite{10163797} proposed frameworks to predict the optimal angle and distance using extensive near-field beam training.
Likewise, reference \cite{9903646} developed a framework that leverages received signals from the far-field wide beam training.
These studies rely solely on in-band measurements and often struggle to achieve optimal performance under a limited pilot budget or low signal-to-noise ratio (SNR) conditions.

\begin{figure}[t]
\centering
\subfloat[]{
\begin{minipage}[b]{0.24\textwidth}
    \includegraphics[width=1\textwidth]{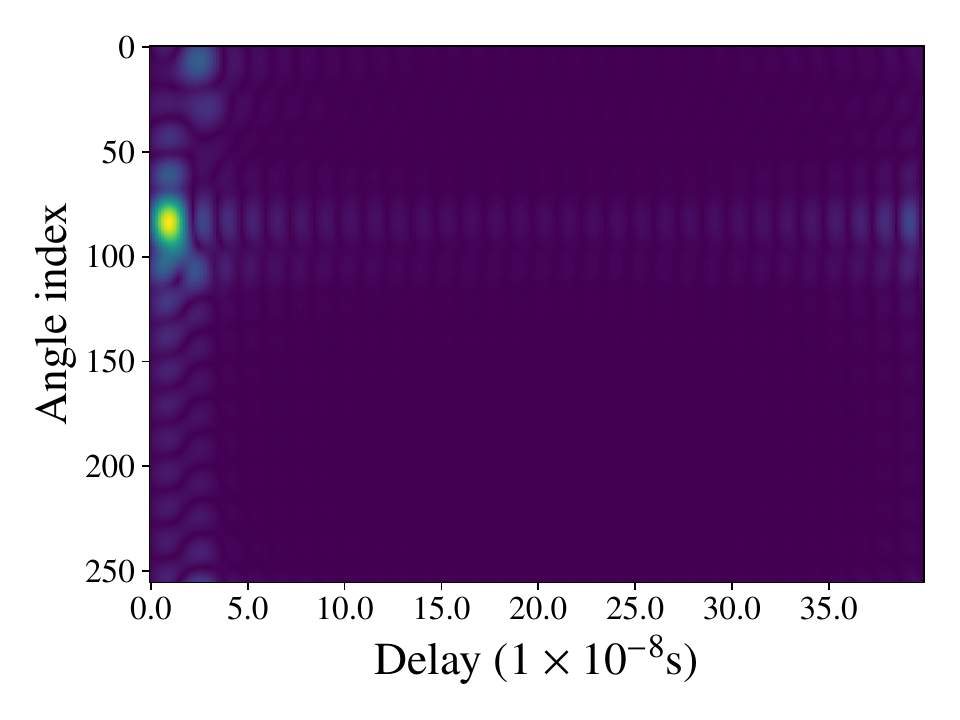}\\
    \includegraphics[width=1\textwidth]{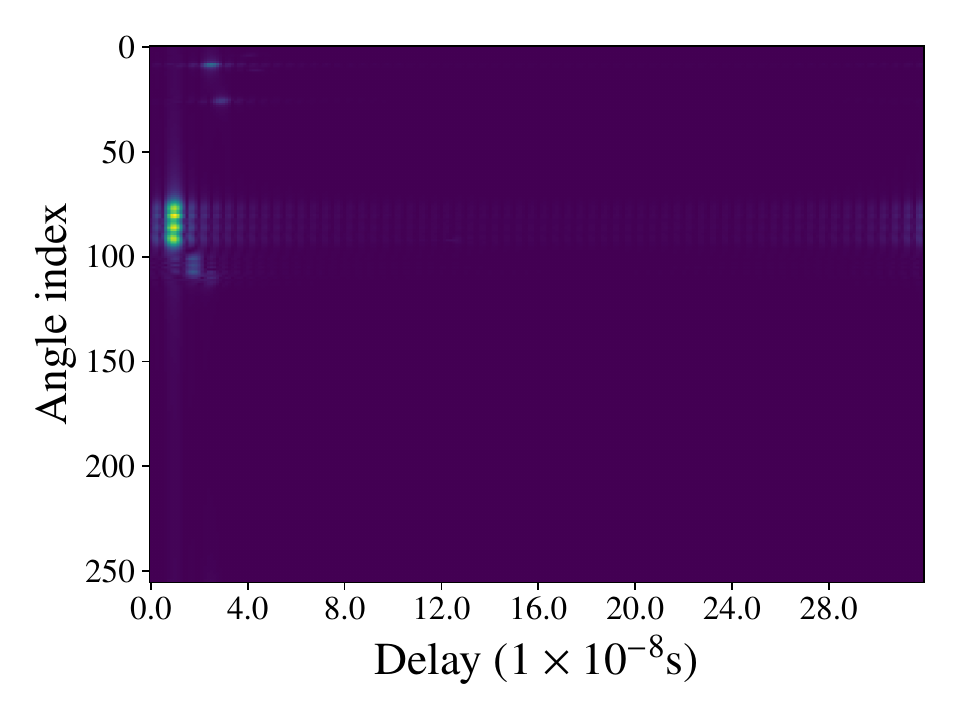}\label{temporal spatial congruence:a}
\end{minipage}}
\subfloat[]{
\begin{minipage}[b]{0.24\textwidth}
    \includegraphics[width=1\textwidth]{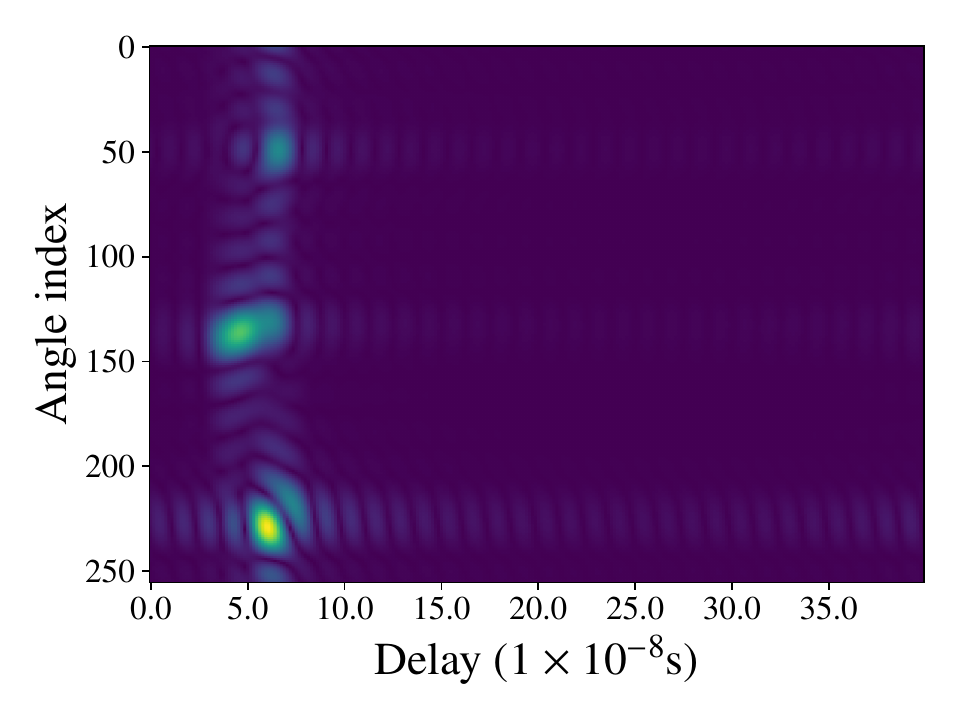}\\
    \includegraphics[width=1\textwidth]{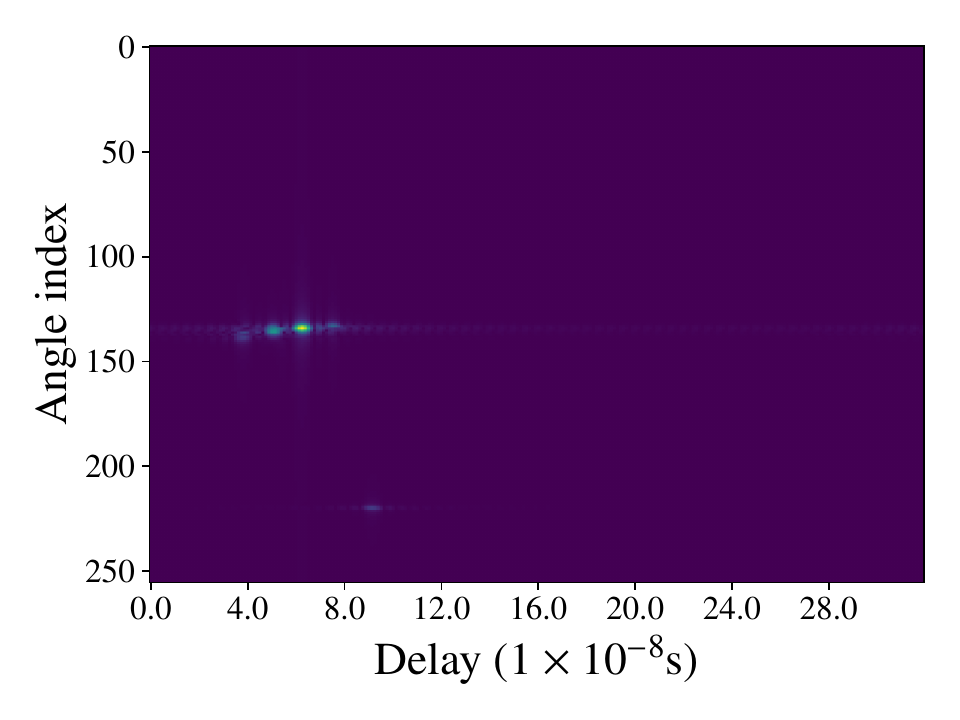}\label{temporal spatial congruence:c}
\end{minipage}}
\caption{An illustration of spatial and temporal correlation between sub-6G and mmWave channels. Each column represents a pair of sub-6G (upper row) and mmWave (lower row) augmented angle and delay profiles. 
The first column describes a line-of-sight (LoS) condition, while the second column represents a non-LoS (NLoS) condition \cite{Remcom}.}
\label{temporal spatial congruence}
\end{figure}

\par Several studies \cite{ali2017millimeter,alrabeiah2020deep,dengwcnc2024, deng2024csi} have utilized the angular domain congruence between sub-6G and mmWave channels to assist the far-field beam selection or beamforming.
Building on \cite{ali2017millimeter}, reference \cite{wu2024near} proposed a complex simultaneous logit-weighted block OMP algorithm for near-field channel estimation, which leverages the angular domain congruence between sub-6G and mmWave channel to assign weights.
However, in the context of near-field beamforming or channel estimation, it's not just the angle information that is requisite; distance information is similarly necessary.
Moreover, the inherent channel variances between sub-6G and mmWave frequencies present a significant challenge to reliably mapping sub-6G information to mmWave beam with performance guarantee.

\par Conformal risk control (CRC) is a recent technique that aims at providing risk guarantees for the set-based decision, which has been extensively studied in numerous previous works in mathematics and statistics \cite{kuchibhotla2020exchangeability,angelopoulos2022conformal}, been applied to a lot of engineering settings including computer vision, natural language processing, information retrieval systems \cite{xu2024two}, health imaging \cite{hulsman2024conformal}, anomaly detection \cite{bai2025crc}, and telecommunication \cite{10445122,zhu2024conformal}.
Extensions of CRC include cross-validation-based method \cite{cohen2024cross} and online strategies \cite{angelopoulos2023conformal,zecchin2024localized}.

\par In the context of wireless communication, reference \cite{10262367} was the first to explore the application of conformal prediction (CP), a precursor of CRC \cite{angelopoulos2022conformal}, to the design of AI for communication systems, with focus on demodulation, modulation classification, and channel prediction.
Additionally, reference \cite{10416237} investigated federated CP in a wireless setting, proposing a novel wireless federated CP framework for federated reliable inference, and the work \cite{wen2025distributed} extended the approach to a fully decentralized setting.
Inspired by the prior works, this work further explores the use of CRC to enable reliable near-field beam selection.

\subsection{Main Contributions}
The correlation between sub-6G and mmWave channels illustrated in Fig.~\ref{temporal spatial congruence} offers promising potential for enabling low-overhead near-field beam selection, surpassing traditional in-band schemes. 
However, the non-negligible discrepancies between these two frequency bands introduce substantial uncertainty in predicting mmWave near-field beams based on sub-6G information.
To address this challenge, we develop Sub-6GHz Channel Aided Near-field BEam SelecTion (SCAN-BEST), a framework combining deep learning for beam prediction with CRC to ensure reliability guarantee.
Our main contributions are summarized as follows: 
\begin{enumerate}
    \item We propose the SCAN-BEST, a framework for reliable sub-6 GHz channel-aided near-field beam prediction. 
    SCAN-BEST is designed to wrap around any near-field beam predictor, providing a systematic mechanism to quantify and control predictive uncertainty in beam selection — an issue largely overlooked in prior works. 
    Specifically, through a CRC-based scheme, SCAN-BEST efficiently constructs a near-field candidate beam subset that formally meets a target suboptimality ratio with respect to the best beam in this set with a prescribed probability. 
    Furthermore, we extend SCAN-BEST to handle statistical distribution shifts between calibration and test data by incorporating a weighted CRC variant \cite{tibshirani2019conformal}. 
    This extension ensures that SCAN-BEST remains reliable even when there are discrepancies between the environments encountered during calibration and deployment.
    
    \item We conduct comprehensive numerical simulations to validate the efficiency of SCAN-BEST by comparing it with various sub-6G-based and mmWave baselines. 
    The performance is evaluated across different target suboptimal rates and calibration dataset sizes. 
    We also examine the scalability of SCAN-BEST by varying the quality of sub-6G data, including changes in the number of sub-6G antennas and the power levels in sub-6G channel estimation.
    Moreover, we validate the performance of SCAN-BEST in the presence of the covariate shift between the calibration and test data via weighted CRC.

\end{enumerate}

\par The remainder of this paper is organized as follows. 
Section~\ref{sec:system} describes the system model and formulates the problem of sub-6G channel-aided near-field beam selection as well as its calibration.
Section~\ref{sec:method} elaborates on the proposed SCAN-BEST framework from its implementation.
In Section~\ref{sec:numerical results}, we present and discuss the numerical results to validate the effectiveness of SCAN-BEST.
Section~\ref{sec:conclusion} provides conclusions for this paper.

\subsubsection*{Notations} Scalars, vectors and matrices are respectively denoted by lower/upper case, boldface lower case and boldface upper case letters.
Notation ${\mathbf I}_a$ represents an $a\times a$ identity matrix.
$\mathcal{CN}(0,\sigma^2)$ is a zero-mean complex Gaussian distribution with variance $\sigma^2$.
The function $\text{card}(\mathcal{X})$ returns the cardinality of set $\mathcal{X}$.
The notation $A\times B$ is also used for the Cartesian product of the sets $\{1, 2, ..., A\}$ and $\{1,2, ..., B\}$ with integers $A$ and $B$.
Moreover, to distinguish between the sub-6G system and mmWave system, we use $\underline{(\cdot)}$ to indicate parameters corresponding to the sub-6G system, as exemplified by $\underline{x}$.

\section{System Model and Problem Formulation}
\label{sec:system}
\begin{figure*}
\centering
\includegraphics[width=0.98\textwidth]{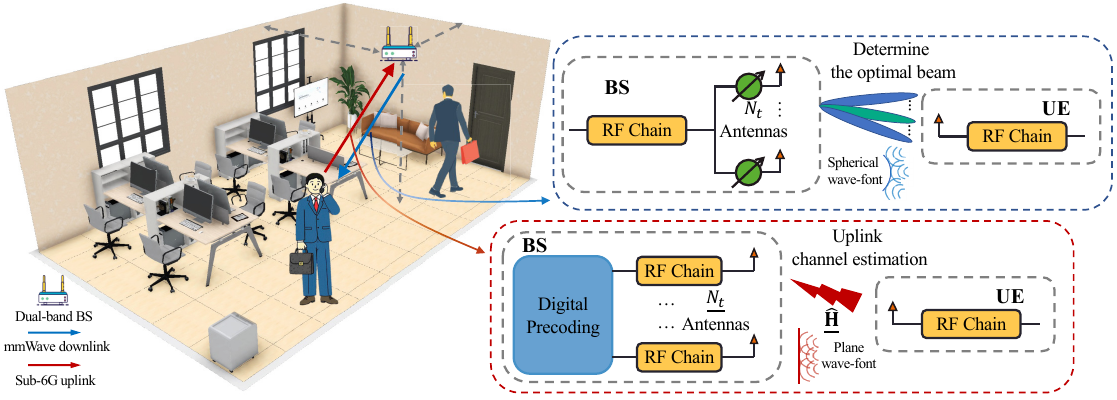}
\caption{An illustration of system model and beamforming architecture.}
\label{bs_ue_location}
\end{figure*}
As depicted in Fig.~\ref{bs_ue_location}, as in  \cite{ali2017millimeter,alrabeiah2020deep,dengwcnc2024}, we consider a dual-band system comprising a BS and single UE.
Both the BS and UE are equipped with two transceivers that operate in sub-6G frequencies and mmWave frequencies, respectively.
The UE is assumed to have a single antenna in both mmWave and sub-6G frequencies. 
As for the BS, the mmWave system is equipped \textcolor{black}{a uniform linear array (ULA) comprising $N_{\text{t}}$ antennas with half-wavelength spacing}, and it adopts a fully analog beamforming architecture.
In contrast, its sub-6G system is equipped with \textcolor{black}{a ULA of $\underline{N_{\text{t}}}$ antennas with half-wavelength spacing}, and it employs a fully digital beamforming architecture.
In the rest of this section, we will define the mmWave and sub-6G radio interfaces, as well as the problem definition.

\subsection{MmWave Downlink Communication and Channel Model}
We consider a wideband mmWave orthogonal frequency division multiplex (OFDM) system operating at the center frequency $f_{\text{c}}$ with a total system bandwidth $W$ and $M$ subcarriers. 
The sampling period $T_{\text{s}}$ is given by $T_{\text{s}} = 1/W$, and the number of channel taps at the resolution $T_{\text{s}}$ is denoted as $D$.
The signal transmitted at the $m$-th subcarrier $s_{\text{t},m}\in {\mathbb C}$ follows the complex Gaussian distribution $\mathcal{CN}(0,{P_{\text{t}}}/{M})$ with $P_{\text{t}}$ being the total transmitting power. 
Let $\mathbf{f}$ be the analog beam applied by the BS.
The received signal of the UE at the $m$-th subcarrier $y_m\in {\mathbb C}$ can be written as \cite{10643599}
\begin{equation}
y_m={\mathbf h}_k^H{\mathbf f}s_{\text{t},m}+n_m,
\end{equation}
where $\mathbf{h}_m^H\in {\mathbb C}^{1\times N_{\text{t}}}$ denotes the frequency-domain channel from the BS to the UE at the $k$-th subcarrier, and $n_m\sim \mathcal{CN}(0,\sigma_{\text{n}}^2)$ is the additive noise with the variance $\sigma_{\text{n}}^2$. 

\par We adopt a wideband geometric channel model consisting of a total of $L$ paths between the BS and the UE.
To account for large arrays, we model the extended near-field range via spherical wave-fronts \cite{cui2022near}.
The channel gain, angle of departure (AoD), and time of arrival (ToA) of the $l$-th path between the BS and the UE are denoted by $g_{l}$, $\theta_{l}$, and $\tau_{l}$, respectively.
Futuremore, if the $l$-th path is in line of sight (LOS), the additional path parameter $r_{l}$ is used to denote the distance between the BS and the UE, while otherwise, $r_{l}$ denotes the distance between the BS and the scatterer in the $l$-th path of the UE.
Overall, the delay-$d$ mmWave channel tap from the BS to UE, denoted as ${\mathbf h}_d^H\in {\mathbb C}^{1\times N_{\text{t}}}$, is given by \cite{ali2017millimeter}
\begin{equation}
{\mathbf h}_d^H=\sqrt{N_{\text{t}}}\sum_{l=1}^{L}g_{l} p(dT_{\text{s}}-\tau_{l}){\mathbf b}^H(\theta_{l},r_{l}), \label{tap d mmWave channel}
\end{equation}
for $d=1,\dots,D$, where $p(\cdot)$ denote the pulse-shaping filter, and the steering vector ${\mathbf b}^H(\theta_{l},r_{l})$ is given by
\begin{equation}
{\mathbf b}^H(\theta_{l},r_{l})=\frac{1}{\sqrt{N_{\text{t}}}}[e^{-j2\pi(r_{l}^{(0)}-r_{l})},\dots,e^{-j2\pi(r_{l}^{(N_{\text{t}}-1)}-r_{l})}],
\end{equation}
with $r_{l}^{(n)}=\sqrt{r_{l}^2+{\delta^{(n)}}^2(\lambda/2)^2-r_{l}\theta_{l}\delta^{(n)} \lambda}$, with $\delta^{(n)}={(2n-N_{\text{t}}-1)}/{2}$ \cite{cuiChannelEstimationExtremely2022}.
Based on the delay-$d$ channel tap in \eqref{tap d mmWave channel}, the frequency-domain mmWave channel at the $m$-th subcarrier, ${\mathbf h}_m^H\in {\mathbb C}^{1\times N_{\text{t}}}$ is given by
\begin{equation}
{\mathbf h}_m^H=\sum_{d=1}^{D}{\mathbf h}_d^H e^{-j\frac{2\pi m}{M}d},
\end{equation}
for $m=1,\dots,M$.
For notational convenience, we denote the overall frequency-domain mmWave OFDM channel as ${\mathbf H}\triangleq[{\mathbf h}_1,\dots,{\mathbf h}_M]^H$.

\subsection{Sub-6G Channel Model and Uplink Channel Estimation}
The sub-6G link between BS and UE operates at the center frequency $\underline{f_{\text{c}}}$, with a total bandwidth $\underline{W}$, number of subcarriers $\underline{M}$, sampling period $\underline{T_{\text{s}}} = 1/\underline{W}$, and number of channel taps $\underline{D}$.
Due to the more abundant scattering and diffraction of sub-6G propagation with respect to (w.r.t.) the mmWave band, there typically exist $\underline{L}> L$ paths between the BS and UE \cite{samimi20163,ali2017millimeter}.
We denote the gain, angle of arrival (AoA), time of arrival (ToA) of the $l$-th channel path with $\underline{g}_{l}$, $\underline{\theta}_{l}$, and $\underline{\tau}_{l}$, respectively.
Due to the shorter Rayleigh distance of the sub-6G system compared to the mmWave system, the UE is likely to be located in either the far-field or near-field region of the sub-6G system.
Thus, similar to the latest work \cite{wu2024near}, the sub-6G channel between the BS and the UE is modeled under two cases.

\par When the UE is located in the far-field region, the delay-$d$ sub-6G channel tap from the UE to BS $\underline{{\mathbf h}_{d}}\in {\mathbb C}^{\underline{N_{\text{t}}}\times 1}$ is given by \cite{ali2017millimeter}
\begin{equation}
\underline{{\mathbf h}}_{d}=\sqrt{\underline{N_{\text{t}}}}\sum_{l=1}^{\underline{L}}\underline{g}_{l}p(d\underline{T_{\text{s}}}-\underline{\tau}_l)\underline{\mathbf a}(\underline{\theta}_l), \label{tap d sub-6G channel}
\end{equation}
for $d=1,\dots,\underline{D}$, where the steering vector $\underline{\mathbf a}(\underline{\theta}_l)$ is
\begin{equation}
\underline{\mathbf a}(\underline{\theta}_l)= \frac{1}{\sqrt{\underline{N_{\text{t}}}}}[1,e^{-j\pi \sin(\underline{\theta}_l)},\dots,e^{-j\pi(\underline{N_{\text{t}}}-1)\sin(\underline{\theta}_l)}]^T.
\end{equation}
When the UE is located in the near-field region, $\underline{{\mathbf h}_{d}}$ is similar to the mmWave near-field one and thus will not be elaborated further here.

\par Using \eqref{tap d sub-6G channel}, the frequency-domain sub-6G channel at the $m$-th subcarrier $\underline{{\mathbf h}}_m\in {\mathbb C}^{\underline{N_{\text{t}}}\times 1}$ is given by
\begin{equation}
\underline{{\mathbf h}}_m=\sum_{d=1}^{\underline{D}}\underline{{\mathbf h}}_d e^{-j\frac{2\pi m}{\underline{M}}d},
\end{equation}
for $m=1,\dots,\underline{M}$.

\par Through a fully digital receiver architecture in the sub-6G system, the frequency-domain channels at the $\underline{M}$ subcarriers is estimated via sub-6G pilot transmission.
To this end, the UE sends the uplink pilot signal $\underline{s}_{\text{p},m}=\sqrt{{\underline{P_{\text{s}}}}/{\underline{M}}}$ at the $m$-th subcarrier with $\underline{P_{\text{s}}}$ being the total power of the pilot signal.
The received signals by the BS at the $m$-th subcarrier $\underline{{\mathbf y}}_{\text{p},m}$ can be written as 
\begin{equation}
\underline{{\mathbf y}}_{\text{p},m}=\underline{{\mathbf h}}_m\underline{s}_{\text{p},m}+\underline{{\mathbf n}}_m,
\end{equation}
where $\underline{\mathbf n}_m\sim \mathcal{CN}(0,\underline{\sigma_{\text{n}}^2}{\mathbf I}_{\underline{N_{\text{t}}}})$ denotes the noise vector with $\underline{\sigma_{\text{n}}^2}$ being the sub-6G noise power.
The frequency-domain sub-6G channel estimate at the $m$-th subcarrier $\underline{\widehat{\mathbf h}}_m$ can be obtained via a low-complexity least square (LS) algorithm, which is given by
\begin{equation}
\underline{\widehat{\mathbf h}}_m=\underline{{\mathbf y}}_{\text{p},m}~\underline{s}_{\text{p},m}^{-1}.
\end{equation} 
We denote the frequency-domain sub-6G OFDM channel estimate as $\widehat{\underline{\mathbf H}}\triangleq [~{\underline{\widehat{\mathbf h}}_1},\dots,{\underline{\widehat{\mathbf h}}_{\underline{M}}}~]^H$.

\subsection{Problem Formulation}
Following prior works \cite{zhangFastNearFieldBeam2022,pengChannelEstimationExtremely2023}, we consider the problem of beam selection from a predefined near-field codebook.
Specifically, we adopt the polar codebook introduced in \cite{cuiChannelEstimationExtremely2022}, denoted as $\mathcal{W}\triangleq\left(\mathbf{w}_{n,s}\right)_{N_{\text{t}}\times S}$, where the numbers of candidate angles and distances are equal to the number of mmWave transmit antennas, $N_{\text{t}}$, and to an integer parameter $S$, respectively.
Each beam ${\mathbf w}_{n,s}={\mathbf b}(\theta_n, r_{n,s})$ in the set $\mathcal{W}$ corresponds to the $(n,s)$-th angle-distance sector, with
\begin{align}
    \theta_n &=\arcsin\left(\frac{2n-N_{\text{t}}-1}{N_{\text{t}}}\right),\notag\\
    r_{n,s}&=\frac{\left(1-\sin^2(\theta_n)\right)N_{\text{t}}^2d^2}{2s\beta^2\lambda},
\end{align}
where $\beta$ is the correlation parameter between neighboring beams \cite{cuiChannelEstimationExtremely2022}.

\par For a given mmWave channel sample $\mathbf{H}$, the \textit{optimal beam} $\mathbf{f}^\star$ is defined as the beam in set $\mathcal{W}$ that maximizes the average spectral efficiency, i.e., as
\begin{equation}
{\mathbf f}^{\star}={\mathbf w}_{n^{\star},s^{\star}}=\mathop{\arg\max}\limits_{{\mathbf w}_{n,s}\in {\mathcal W}}~R(\mathbf{w}_{n,s},\mathbf{H}), \label{eq: optimal beam}
\end{equation}
where the average spectral efficiency is over all subcarriers given by
\begin{equation}
R(\mathbf{w}_{n,s},\mathbf{H})=\frac{1}{M}\sum_{m=1}^{M}\log_2\left(1+\frac{\frac{P_{\text{t}}}{M}|{\mathbf h}_m^H{\mathbf w}_{n,s}|^2}{\sigma_{\text{n}}^2}\right).\label{eq: average spectral efficiency}
\end{equation}

\par Most previous works \cite{cuiChannelEstimationExtremely2022,zhangFastNearFieldBeam2022,10239282,10163797,9903646} have focused on in-band beam training to determine the optimal near-field beam \eqref{eq: optimal beam}.
However, these methods typically require a substantial number of pilots.
As in \cite{alrabeiah2020deep,liuNMBEnetEfficientNearfield2024,wu2024near,ma2022deep}, we explore the potential of sub-6G information for near-field beam selection to reduce pilot overhead.
Specifically, in order to account for the inherent uncertainty associated with the mapping from sub-6G information to mmWave beam selection, we propose to operate as follows: 
\begin{enumerate}[1)]
    \item \textbf{Sub-6G-based candidate beam selection}: Construct a candidate beam subset $\mathcal{C}(\widehat{\underline{\mathbf{H}}})$ based on the sub-6G channel estimate $\widehat{\underline{\mathbf{H}}}$.
    \item \textbf{MmWave beam training}: Perform limited mmWave beam training within the subset $\mathcal{C}(\widehat{\underline{\mathbf{H}}})$ to choose a beam $\mathbf{f}\in \mathcal{C}(\widehat{\underline{\mathbf{H}}})$.
\end{enumerate}

\par Traditional designs for candidate beam selection, such as top-$K$ \cite{alrabeiah2020deep} and probability sum-based methods \cite{ma2022deep} lack theoretical guarantees on the quality of the pre-selected candidate beam subset.
Thus, the subsequent mmWave beam training may fail to return a well-performing beam in the codebook $\mathcal{W}$ with a probability exceeding user’s requirements.

\par To formalize theoretical guarantees for the candidate beam subset, we first introduce a relaxed notion of beam optimality.
To this end, for any beam $\mathbf{f}$, we denote the \textit{suboptimality ratio} as the ratio of the rate achieved by beam $\mathbf{f}$ and the rate of the optimal beam $\mathbf{f}^{\star}$, i.e.,  
\begin{equation}
r({\mathbf f},\mathbf{H})=\frac{R(\mathbf{f},\mathbf{H})}{R(\mathbf{f}^{\star},\mathbf{H})}. 
\end{equation}  
Then, a beam $\mathbf{f}$ is said to be \textit{$\epsilon$-suboptimal} if it satisfies the condition 
\begin{eqnarray}
    r({\mathbf f},\mathbf{H})\geqslant 1-\epsilon,\label{eq:suboptimality condition}
\end{eqnarray}
where $\epsilon$, with $0\leqslant \epsilon \leqslant 1$, is the suboptimality factor.
In practice, the parameter $\epsilon$ is typically set in the range $0\leqslant \epsilon\leqslant 0.2$, with $\epsilon=0$ identifying the optimal beam $\mathbf{f}^{\star}$. 

\par In this paper, unlike prior studies \cite{cuiChannelEstimationExtremely2022,zhangFastNearFieldBeam2022,pengChannelEstimationExtremely2023}, we aim to design a sub-6G information-aided solution that guarantees a user-specified probability $1-\alpha \in [0,1]$ of identifying an $\epsilon$-suboptimal beam.
Specifically, given a target suboptimal ratio $\epsilon$, we wish to ensure that the probability of failing to obtain an $\epsilon$-suboptimal beam is no larger than $\alpha$, i.e., 
\begin{equation}
    \mathbf{Pr}\left(r(\mathbf{f},\mathbf{H})<1-\epsilon\right)\leqslant \alpha, \label{eq:suboptimal_prob}
\end{equation}
where the probability is taken over mmWave channels $\mathbf{H}$.
Condition \eqref{eq:suboptimal_prob} is satisfied, for a well-designed mmWave training phase, as long as the probability that the candidate beam subset $\mathcal{C}(\widehat{\underline{\mathbf{H}}})$ contains none beams with suboptimality smaller than $\epsilon$ does not exceed $\alpha$, i.e.,
\begin{equation}
\mathbf{Pr}\left(\nexists\mathbf{w} \in \mathcal{C}(\widehat{\underline{\mathbf{H}}}): r({\mathbf{w}},\mathbf{H}) \geqslant 1-\epsilon\right) \leqslant \alpha. \label{target coverage rate}
\end{equation}
We will refer to the probability in \eqref{target coverage rate} as the \textit{miscoverage probability}, and denote $1-\alpha$ as \textit{target coverage rate}.

\section{Sub-6G Channel Aided Near-field Beam Selection Framework}
\label{sec:method}
In this section, we first present an overview of SCAN-BEST, and then elaborate on its implementation.

\subsection{Overview of SCAN-BEST}
\begin{figure*}[t]
\centering
\includegraphics[width=0.96\textwidth]{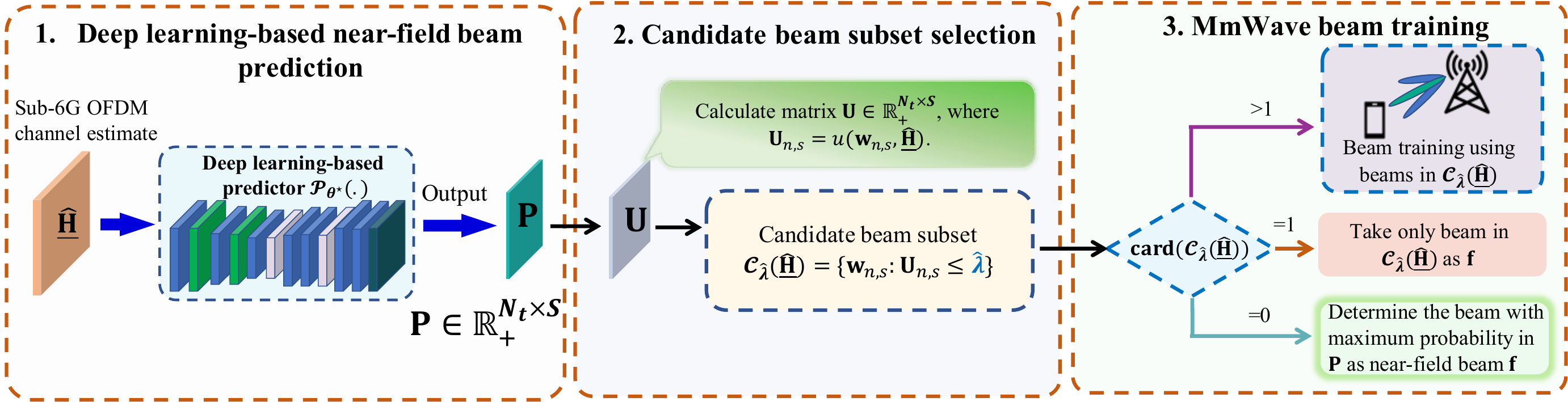}
\caption{The overall design of the SCAN-BEST framework.}
\label{SCAN-BEST diagram}
\end{figure*}
SCAN-BEST wraps around a deep learning-based beam predictor and on the statistical methodology of CRC \cite{angelopoulos2022conformal}.
For given suboptimaling parameter $\epsilon$ and probability $\alpha$, SCAN-BEST, as illustrated in Fig.~\ref{SCAN-BEST diagram}, proceeds along the following three stages:
\begin{enumerate}
    \item \textcolor{black}{\textbf{Deep learning-based near-field beam prediction:} Based on sub-6G channel estimate, employ any pre-trained neural model to assign each beam in the codebook $\mathcal{W}$ an estimated probability of being the optimal near-field beam \cite{dengwcnc2024}.}
    \item \textbf{Sub-6G-based candidate beam subset selection:} Using the predicted probabilities for all candidate beams in $\mathcal{W}$, CRC is leveraged to select a near-field candidate beam subset $\mathcal{C}(\widehat{\underline{\mathbf{H}}})$ that ensures the desired target coverage rate condition in \eqref{target coverage rate}.
    \item \textbf{MmWave beam training:} The near-field beam is determined by performing limited beam training only along the beam within the candidate beam subset $\mathcal{C}(\widehat{\underline{\mathbf{H}}})$.
\end{enumerate}

\subsection{Deep Learning-Based Near-Field Beam Prediction}
In the first stage, a deep learning model is used to predict the probability of each beam in the codebook $\mathcal{W}$ being the optimal near-field beam based sub-6G channel estimate $\widehat{\underline{\mathbf{H}}}$.
To elaborate, introduce a probability matrix ${\mathbf P}\in [0,1]^{ N_{\text{t}}\times S}$, where the entry ${\mathbf P}_{n,s}$ represents the predicted probability that the beam ${\mathbf w}_{n,s}$ is the optimal near-field beam.
The mapping between the sub-6G channel estimate $\widehat{\underline{\mathbf{H}}}$ and matrix ${\mathbf P}$, denoted as $\mathbf{P}=\mathscr{P}_{\boldsymbol{\theta}}(\widehat{\underline{\mathbf{H}}})$, is parameterized with a vector $\boldsymbol{\theta}$.

\par In order to train the neural network mapping $\mathbf{P}= \mathscr{P}_{\boldsymbol{\theta}}(\widehat{\underline{\mathbf{H}}})$, as in \cite{dengwcnc2024}, one assumes the availability of a dataset of $N_{\text{tr}}$ training samples, denoted as $\mathcal{D}_{\text{tr}}\triangleq\{\widehat{\underline{\mathbf{H}}}_i,n_i^\star,s_i^\star\}_{i=1}^{N_{\text{tr}}}$, associating a sub-6G channel estimate $\widehat{\underline{\mathbf{H}}}_i$ with the corresponding index $(n_i^\star,s_i^\star)$ of the optimal beam in \eqref{eq: optimal beam}.
Training is typically done by minimizing the standard cross-entropy loss, yielding the optimized parameter \cite{dengwcnc2024}
\begin{equation}
    \boldsymbol{\theta}^\star=\mathop{\arg\max}_{\boldsymbol{\theta}}\left\{-\sum\limits_{i=1}^{N_{\text{tr}}}\log \mathscr{P}_{\boldsymbol{\theta},n_i^{\star},s_i^{\star}}(\widehat{\underline{\mathbf{H}}})\right\}, \label{eq:trained_model}
\end{equation}
where $\mathscr{P}_{\boldsymbol{\theta},n_i^{\star},s_i^{\star}}(\widehat{\underline{\mathbf{H}}})$ is the $(n_i^{\star},s_i^{\star})$-th entry of matrix $\mathscr{P}_{\boldsymbol{\theta}}(\widehat{\underline{\mathbf{H}}})$.

\subsection{Sub-6G-Based Candidate Beam Subset Selection}
Given the trained model $\mathscr{P}_{\boldsymbol{\theta}^{\star}}(\cdot)$ in \eqref{eq:trained_model}, and given an input $\widehat{\underline{\mathbf{H}}}$, SCAN-BEST evaluates a negatively oriented score for beam $\mathbf{w}_{n,s}$, namely
\begin{equation}
u(n,s,\widehat{\underline{\mathbf{H}}}) = -\log P_{\text{max}}(\widehat{\underline{\mathbf{H}}})-\log {\mathbf P}_{n,s}(\widehat{\underline{\mathbf{H}}}), \label{eq:unlikelihood}
\end{equation}
where 
\begin{align}
    \mathbf{P}_{n,s}(\widehat{\underline{\mathbf{H}}})&=\mathscr{P}_{\boldsymbol{\theta}^{\star},n,s}(\widehat{\underline{\mathbf{H}}}),\notag\\
    P_{\text{max}}(\widehat{\underline{\mathbf{H}}})&=\max\limits_{n,s\in N_{\text{t}}\times S}\mathbf{P}_{n,s}(\widehat{\underline{\mathbf{H}}}).
\end{align}
The term $-\log P_{\text{max}}(\widehat{\underline{\mathbf{H}}})$ is referred to as the R\'enyi min-entropy for the predictive conditional probability $\mathscr{P}_{\boldsymbol{\theta}^{\star}}(\widehat{\underline{\mathbf{H}}})$, which provides a measure of the uncertainty of the model $\mathscr{P}_{\boldsymbol{\theta}^{\star}}(\cdot)$ regarding this prediction with input $\widehat{\underline{\mathbf{H}}}$ \cite{simeone_cqit}.
Adding this term to the negative log-likelihood $-\log \mathbf{P}_{n,s}(\widehat{\underline{\mathbf{H}}})$ degrades the score $u(n,s,\widehat{\underline{\mathbf{H}}})$ when the predictive uncertainty is high.
Recall, in fact, that a smaller value of the score $u(n,s,\widehat{\underline{\mathbf{H}}})$ indicates that the beam $\mathbf{w}_{n,s}$ is predicted to be more likely to be optimal.

\par The subset $\mathcal{C}_{\lambda}(\widehat{\underline{\mathbf{H}}})$ of candidate beams includes all beams in set $\mathcal{W}$ whose scores \eqref{eq:unlikelihood} are no larger than a threshold $\lambda$, i.e.,
\begin{equation}
\mathcal{C}_{\lambda}(\widehat{\underline{\mathbf{H}}})=\left\{\mathbf{w}_{n,s} \in \mathcal{W}:u(n,s,\widehat{\underline{\mathbf{H}}})\leqslant \lambda\right\}. \label{def candidate beam subset}
\end{equation}

\par In order to select the threshold $\lambda$ so that the coverage condition \eqref{target coverage rate} is satisfied, we adopt CRC \cite{angelopoulos2022conformal}.
To this end, we assume a held-out calibration dataset consisting of $N_{\text{cal}}$ calibration samples, defined as $\mathcal{D}_{\text{cal}} \triangleq \{ (\widehat{\underline{\mathbf{H}}}_i, \mathbf{H}_i) \}_{i=1}^{N_{\text{cal}}}$, where $\widehat{\underline{\mathbf{H}}}_i$ denotes the sub-6~GHz channel estimate and $\mathbf{H}_i$ is the corresponding ground-truth mmWave channel.
Using the calibration dataset, the probability \eqref{target coverage rate} is estimated, and the estimate is evaluated as a function of the threshold $\lambda$.
This evaluation is leveraged to find a threshold that satisfies the inequality \eqref{target coverage rate}.

\par Specifically, the set $\mathcal{C}_{\lambda}(\widehat{\underline{\mathbf{H}}}_i)$ is evaluated using \eqref{def candidate beam subset} for all calibration data points $i=1,\dots,N_{\text{cal}}$.
Then, the miscoverage probability \eqref{target coverage rate} is estimated using the calibration data set as 
\begin{equation}
    \hat{R}(\lambda)=\frac{1}{N_{\text{cal}}}\sum\limits_{i=1}^{N_{\text{cal}}}\mathds{1}\left(\nexists \mathbf{w}\in \mathcal{C}_{\lambda}(\widehat{\underline{\mathbf{H}}}_i): r(\mathbf{w},\mathbf{H}_i)\geqslant 1-\epsilon\right), \label{eq: empirical risk function}
\end{equation}
where $\mathds{1}(\cdot)$ is a conventional indicator function, which returns 1 when the event is true and 0 otherwise.
\begin{proposition}
    \label{prop: empirical risk function}
    The subset $\mathcal{C}_{\hat{\lambda}}(\widehat{\underline{\mathbf{H}}})$ in \eqref{def candidate beam subset} with the threshold
    \begin{equation}
    \hat{\lambda}=\inf\left\{\lambda\in \mathbb{R}:\hat{R}(\lambda)\leqslant\alpha+\frac{\alpha-1}{N_{\text{cal}}}\right\}\label{eq: hat lambda optimization}
    \end{equation} 
    attains the target coverage rate $1-\alpha$ in \eqref{target coverage rate} for any input $\widehat{\underline{\mathbf{H}}}$ and ``any'' predictive model $\mathscr{P}_{\boldsymbol{\theta}^{\star}}(\cdot)$.
\end{proposition}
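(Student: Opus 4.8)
The plan is to recognize Proposition~\ref{prop: empirical risk function} as an instance of the conformal risk control (CRC) guarantee of \cite{angelopoulos2022conformal} and to verify that the beam-selection loss used here satisfies the three structural hypotheses of that result. For each sample index $i$, define the loss
\[
L_i(\lambda) = \mathds{1}\left(\nexists\, \mathbf{w}\in \mathcal{C}_{\lambda}(\widehat{\underline{\mathbf{H}}}_i): r(\mathbf{w},\mathbf{H}_i)\geqslant 1-\epsilon\right),
\]
so that $\hat{R}(\lambda)=\tfrac{1}{N_{\text{cal}}}\sum_{i=1}^{N_{\text{cal}}}L_i(\lambda)$ in \eqref{eq: empirical risk function}. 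I would first establish three facts. \emph{(i) Monotonicity and right-continuity:} the set $\mathcal{C}_\lambda(\widehat{\underline{\mathbf{H}}})$ in \eqref{def candidate beam subset} is nondecreasing in $\lambda$, since raising the threshold on the fixed scores $u(n,s,\cdot)$ can only add beams; hence the event inside the indicator can only switch from true to false as $\lambda$ grows, so $\lambda\mapsto L_i(\lambda)$ is nonincreasing, and it is right-continuous because $\bigcap_{\lambda>\lambda_0}\{u\leqslant\lambda\}=\{u\leqslant\lambda_0\}$ and the codebook $\mathcal{W}$ is finite. \emph{(ii) Boundedness and eventual vanishing:} $L_i(\lambda)\in\{0,1\}$, so the loss is bounded by $B=1$; moreover, for $\lambda$ large enough $\mathcal{C}_\lambda(\widehat{\underline{\mathbf{H}}})=\mathcal{W}$, which contains the optimal beam $\mathbf{f}^{\star}$ with $r(\mathbf{f}^{\star},\mathbf{H})=1\geqslant 1-\epsilon$, so $L_i(\lambda)=0$ eventually; this makes the infimum in \eqref{eq: hat lambda optimization} well defined whenever $\alpha\geqslant 1/(N_{\text{cal}}+1)$, while for smaller $\alpha$ the convention $\inf\emptyset=+\infty$ returns the full codebook, for which the miscoverage is $0$. \emph{(iii) Exchangeability:} since $\mathscr{P}_{\boldsymbol{\theta}^{\star}}$ is trained on the separate set $\mathcal{D}_{\text{tr}}$, conditionally on $\mathcal{D}_{\text{tr}}$ the calibration pairs $\{(\widehat{\underline{\mathbf{H}}}_i,\mathbf{H}_i)\}_{i=1}^{N_{\text{cal}}}$ and the test pair $(\widehat{\underline{\mathbf{H}}},\mathbf{H})$ are i.i.d., hence exchangeable, and so are the induced loss functions $L_1,\dots,L_{N_{\text{cal}}},L_{N_{\text{cal}}+1}$.

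Next I would observe that the rule \eqref{eq: hat lambda optimization} is exactly the CRC threshold with $B=1$: multiplying $\hat{R}(\lambda)\leqslant\alpha+\tfrac{\alpha-1}{N_{\text{cal}}}$ by $N_{\text{cal}}$ and rearranging gives the equivalent form $\tfrac{1}{N_{\text{cal}}+1}(1+\sum_{i=1}^{N_{\text{cal}}}L_i(\lambda))\leqslant\alpha$. Then I would carry out the short self-contained argument. Let $\hat{R}_{+}(\lambda)=\tfrac{1}{N_{\text{cal}}+1}\sum_{i=1}^{N_{\text{cal}}+1}L_i(\lambda)$ be the empirical risk over all $N_{\text{cal}}+1$ points (test point included) and $\hat\lambda_{+}=\inf\{\lambda:\hat{R}_{+}(\lambda)\leqslant\alpha\}$. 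Because $L_{N_{\text{cal}}+1}(\lambda)\leqslant B=1$, the set defining $\hat\lambda$ is contained in the set defining $\hat\lambda_{+}$, so $\hat\lambda\geqslant\hat\lambda_{+}$, and monotonicity of $L_{N_{\text{cal}}+1}$ then gives $L_{N_{\text{cal}}+1}(\hat\lambda)\leqslant L_{N_{\text{cal}}+1}(\hat\lambda_{+})$. The crucial point is that $\hat\lambda_{+}$ is a symmetric function of $(L_1,\dots,L_{N_{\text{cal}}+1})$, i.e., it depends only on their unordered multiset; conditioning on this multiset, exchangeability makes $L_{N_{\text{cal}}+1}$ uniform over the $N_{\text{cal}}+1$ functions, so
\[
\mathbb{E}\big[L_{N_{\text{cal}}+1}(\hat\lambda_{+}) \mid \{L_i\}\big]=\hat{R}_{+}(\hat\lambda_{+})\leqslant\alpha,
\]
the last inequality by right-continuity of $\hat{R}_{+}$ and the definition of $\hat\lambda_{+}$ as an infimum. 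Taking total expectation gives $\mathbb{E}[L_{N_{\text{cal}}+1}(\hat\lambda)]\leqslant\alpha$, which is precisely \eqref{target coverage rate} because $\mathbb{E}[L_{N_{\text{cal}}+1}(\hat\lambda)]=\mathbf{Pr}(\nexists\,\mathbf{w}\in\mathcal{C}_{\hat\lambda}(\widehat{\underline{\mathbf{H}}}):r(\mathbf{w},\mathbf{H})\geqslant 1-\epsilon)$.

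Finally I would stress why the guarantee is agnostic to the predictor: the scores $u(n,s,\cdot)$ enter only through the nested family $\{\mathcal{C}_\lambda\}_\lambda$, and none of (i)--(iii) uses any property of $\mathscr{P}_{\boldsymbol{\theta}^{\star}}$ beyond measurability, so an arbitrary (pre-trained) model leaves the derivation intact — its quality affects only the cardinality of $\mathcal{C}_{\hat\lambda}$, not the coverage. The only genuinely delicate bookkeeping is the monotonicity-plus-right-continuity of $L_i$ together with the degenerate case in which $\{\lambda:\hat{R}(\lambda)\leqslant\alpha+(\alpha-1)/N_{\text{cal}}\}$ is empty for small $\alpha$; handling that cleanly via the $\inf\emptyset=+\infty$ convention and the observation $\mathbf{f}^{\star}\in\mathcal{W}$ is what lets the claim hold ``for any input and any predictive model'' without side conditions. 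Note that exchangeability is assumed here rather than proved, and its relaxation to covariate shift between calibration and deployment is exactly what the weighted-CRC extension discussed later in the paper accounts for.
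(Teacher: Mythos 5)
Your proposal is correct and follows essentially the same route as the paper's own proof in Appendix A: the same indicator loss, the same augmented risk over the $N_{\text{cal}}+1$ samples, the inequality $\hat{R}_{\text{f}}(\lambda)\leqslant \tfrac{N_{\text{cal}}}{N_{\text{cal}}+1}\hat{R}(\lambda)+\tfrac{1}{N_{\text{cal}}+1}$, monotonicity of the losses in $\lambda$, and the exchangeability/bag argument yielding $\mathbb{E}[\ell_{N_{\text{cal}}+1}(\hat{\lambda})]\leqslant\alpha$. Your extra attention to right-continuity, the symmetry of $\hat\lambda_{+}$ in the losses, and the degenerate empty-set case only tightens details the paper leaves implicit.
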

\IEEEproof The proof is deferred to Appendix~\ref{appendix:A}.
\begin{figure}[t]
\centering
\includegraphics[width=0.42\textwidth]{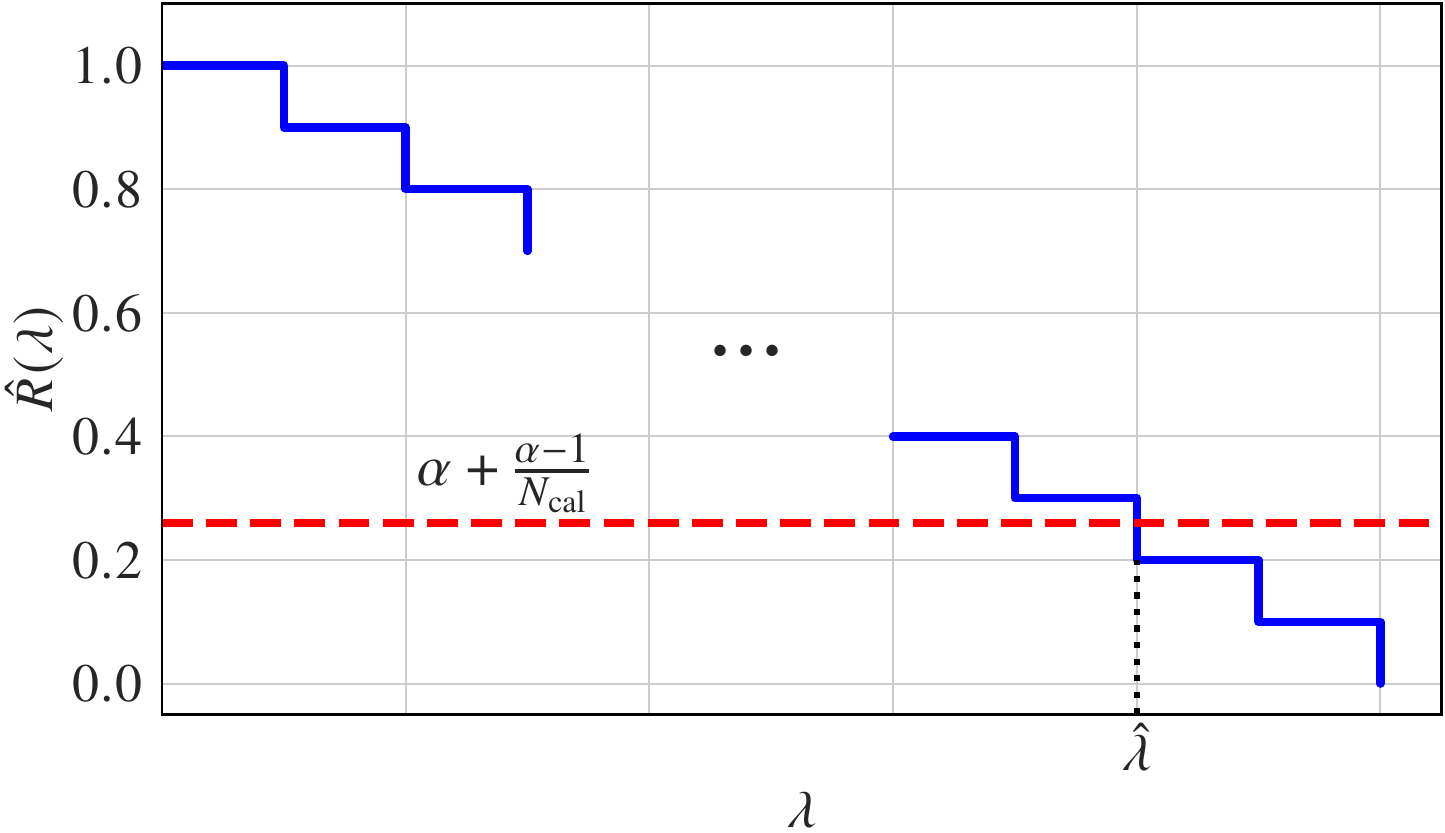}
\caption{A simple illustration of $\hat{R}(\lambda)$ (blue line), the orange line denotes the desired risk level.}
\label{illustrated_fig_R_lambda}
\end{figure}

\par Accordingly, as illustrated in Fig.~\ref{illustrated_fig_R_lambda}, the threshold is selected so that the miscoverage probability \eqref{eq: empirical risk function} satisfies the condition \eqref{target coverage rate} with the caveat that the reliability requirement is made stricter\textemdash from $\alpha$ to $\alpha+(\alpha-1)/N_{\text{cal}}$.
Note that the added term in the coverage probability $\alpha$ decreases to zero as $N_{\text{cal}}\rightarrow \infty$.

\par A threshold satisfying the condition \eqref{eq: hat lambda optimization} always exists, since the choice $\lambda=\max_{i\in \{1,\dots,N_{\text{cal}}\}} \allowbreak u(n_i^{\star},s_i^{\star},\widehat{\underline{\mathbf{H}}}_i)$ yields $\hat{R}(\lambda)=0$.
While the threshold in~\eqref{eq: hat lambda optimization} can be obtained via exhaustive search over all possible values of $\lambda$ or through hierarchical methods (e.g., binary search), these approaches typically involve relatively high computational complexity. 
To address this, we propose a low-complexity algorithm for efficiently determining the threshold in~\eqref{eq: hat lambda optimization}, as detailed in Appendix~\ref{appendix: hat lambda}.

\subsection{Addressing Covariate Shifts between Calibration and Test Data}
The theoretical guarantees of CRC rely on the assumption that calibration and test data follow the same distribution.
In this subsection, we consider a more general case, in which a covariate shift exists between the calibration and test data. 
In this case, the probability distribution of sub-6G channel estimates $\widehat{\underline{\mathbf{H}}}$ in the calibration dataset, denoted as $\Gamma_{\text{cal}}(\widehat{\underline{\mathbf{H}}})$, differs from that in the test dataset, denoted by $\Gamma_{\text{te}}(\widehat{\underline{\mathbf{H}}})$.
To address this distribution mismatch, we adopt the weighted CRC method \cite{tibshirani2019conformal}.

\par Weighted CRC assigns importance weights to each sample in the calibration dataset to compensate for the discrepancy between calibration and test data.
Specifically, the importance weight for the $i$-th calibration sample and the importance weight for the test input $\widehat{\underline{\mathbf{H}}}'$, are defined as
\begin{equation}
\omega_i = \frac{p_i}{\sum_{j=1}^{N_{\text{cal}}}p_j+p'},~\text{and}~\omega'= \frac{p'}{\sum_{j=1}^{N_{\text{cal}}}p_j+p'},\label{eq:importance weight}
\end{equation}
respectively, where the likelihood ratio $p_i$ for $i=1,\dots,N_{\text{cal}}$ and $p'$ are given by
\begin{equation}
p_i=\frac{\Gamma_{\text{te}}(\widehat{\underline{\mathbf{H}}}_i)}{\Gamma_{\text{cal}}(\widehat{\underline{\mathbf{H}}}_i)},~\text{and}~p'= \frac{\Gamma_{\text{te}}(\widehat{\underline{\mathbf{H}}}')}{\Gamma_{\text{cal}}(\widehat{\underline{\mathbf{H}}}')}.
\end{equation}

\par With the weights \eqref{eq:importance weight}, the weighted miscoverage probability \eqref{eq: empirical risk function} is defined as 
\begin{equation}
    \tilde{R}(\lambda) =  \sum\limits_{i=1}^{N_{\text{cal}}} 
    \omega_i\mathds{1} \left( 
        \nexists \mathbf{w} \in \mathcal{C}_{\lambda}(\widehat{\underline{\mathbf{H}}}_i) : 
        r(\mathbf{w}, \mathbf{H}_i) \geqslant 1 - \epsilon 
    \right).
\end{equation}
Intuitively, this estimate assigns more importance to the samples in the calibration dataset that are more likely to be samples from the test distribution.
\begin{proposition}
    \label{prop: weighted_CRC}
The subset $\mathcal{C}_{\hat{\lambda}}(\widehat{\underline{\mathbf{H}}})$ in \eqref{def candidate beam subset} with the threshold 
\begin{equation}
    \hat{\lambda} = \inf \left\{ \lambda \in \mathbb{R} : \tilde{R}(\lambda) \leqslant \alpha - \omega' \right\}\label{eq: weighted CRC threshold}
\end{equation}
attains the target coverage rate $1-\alpha$ in the test data.
\end{proposition}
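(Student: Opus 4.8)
The plan is to reduce Proposition~\ref{prop: weighted_CRC} to the weighted-exchangeability machinery of \cite{tibshirani2019conformal}, paralleling how the proof of Proposition~\ref{prop: empirical risk function} rests on ordinary exchangeability (the case of uniform weights $\omega_i=\omega'=1/(N_{\text{cal}}+1)$ recovers the threshold in \eqref{eq: hat lambda optimization}). For a data pair $z=(\widehat{\underline{\mathbf{H}}},\mathbf{H})$ I would introduce the miscoverage loss
\begin{equation*}
\ell(z,\lambda)=\mathds{1}\!\left(\nexists\,\mathbf{w}\in\mathcal{C}_{\lambda}(\widehat{\underline{\mathbf{H}}}):r(\mathbf{w},\mathbf{H})\geqslant 1-\epsilon\right),
\end{equation*}
and set $L_i(\lambda)=\ell(Z_i,\lambda)$ for the calibration pairs $Z_i=(\widehat{\underline{\mathbf{H}}}_i,\mathbf{H}_i)$ and $L'(\lambda)=\ell(Z',\lambda)$ for the test pair $Z'=(\widehat{\underline{\mathbf{H}}}',\mathbf{H}')$. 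Because the subsets $\mathcal{C}_{\lambda}(\cdot)$ in \eqref{def candidate beam subset} are nested and non-decreasing in $\lambda$, each $\ell(z,\cdot)$ is a $\{0,1\}$-valued, non-increasing, right-continuous step function that vanishes for all $\lambda\geqslant u(n^{\star},s^{\star},\widehat{\underline{\mathbf{H}}})$, since the optimal beam is trivially $\epsilon$-suboptimal. The target \eqref{target coverage rate} under the test distribution is then exactly $\mathbb{E}\!\left[L'(\hat\lambda)\right]\leqslant\alpha$, with $\hat\lambda$ as in \eqref{eq: weighted CRC threshold} and $\tilde R(\lambda)=\sum_{i=1}^{N_{\text{cal}}}\omega_iL_i(\lambda)$.

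First I would establish the weighted exchangeability. Under the covariate-shift model, $Z_1,\dots,Z_{N_{\text{cal}}}$ are i.i.d.\ from a joint law with sub-6G marginal $\Gamma_{\text{cal}}$, while $Z'$ is an independent draw from the law with marginal $\Gamma_{\text{te}}$ but with the same conditional law of $\mathbf{H}$ given $\widehat{\underline{\mathbf{H}}}$; hence the likelihood ratio between the test and calibration joint laws, evaluated at $z=(\widehat{\underline{\mathbf{H}}},\mathbf{H})$, equals $p(z)=\Gamma_{\text{te}}(\widehat{\underline{\mathbf{H}}})/\Gamma_{\text{cal}}(\widehat{\underline{\mathbf{H}}})$, i.e.\ the quantity appearing in \eqref{eq:importance weight}. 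Appending $Z'$ to the calibration pool and writing $Z_{N_{\text{cal}}+1}=Z'$, the standard weighted-exchangeability lemma of \cite{tibshirani2019conformal} gives that, conditionally on the unordered multiset $E=\{Z_1,\dots,Z_{N_{\text{cal}}+1}\}$ (ties occur with probability zero under absolute continuity), the position $J$ of the test pair satisfies $\mathbf{Pr}(J=j\mid E)=p(z_j)/\sum_{k=1}^{N_{\text{cal}}+1}p(z_k)$; this normalized weight is precisely $\omega_j$ for $j\leqslant N_{\text{cal}}$ and $\omega'$ for $j=N_{\text{cal}}+1$ as defined in \eqref{eq:importance weight}, regardless of which element is treated as the test point, because the denominator is always $\sum_k p(z_k)$.

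Next I would bring in the oracle threshold evaluated from the full $(N_{\text{cal}}+1)$-point multiset,
\begin{equation*}
\lambda^{\star}(E)=\inf\Big\{\lambda\in\mathbb{R}:\sum_{k=1}^{N_{\text{cal}}+1}\omega_k\,\ell(z_k,\lambda)\leqslant\alpha\Big\},
\end{equation*}
with $\omega_{N_{\text{cal}}+1}:=\omega'$; this is a symmetric, $E$-measurable quantity, finite because the weighted sum vanishes for large $\lambda$, and, by right-continuity of a finite non-increasing sum of right-continuous step functions, it satisfies $\sum_k\omega_k\ell(z_k,\lambda^{\star}(E))\leqslant\alpha$. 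The crucial comparison is $\hat\lambda\geqslant\lambda^{\star}(E)$: on the event $\{J=j\}$, \eqref{eq: weighted CRC threshold} reads $\hat\lambda=\inf\{\lambda:\sum_{i\neq j}\omega_i\ell(z_i,\lambda)\leqslant\alpha-\omega_j\}$, and because $\ell(z_j,\lambda)\leqslant 1$ we get $\sum_k\omega_k\ell(z_k,\lambda)\leqslant\sum_{i\neq j}\omega_i\ell(z_i,\lambda)+\omega_j$, so every $\lambda$ feasible for $\hat\lambda$ is feasible for $\lambda^{\star}(E)$, whence $\hat\lambda\geqslant\lambda^{\star}(E)$. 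Since $\ell(z_J,\cdot)$ is non-increasing, $L'(\hat\lambda)=\ell(z_J,\hat\lambda)\leqslant\ell(z_J,\lambda^{\star}(E))$, so taking the conditional expectation given $E$ and using the law of $J$ from the previous step,
\begin{equation*}
\mathbb{E}\!\left[L'(\hat\lambda)\,\big|\,E\right]\leqslant\mathbb{E}\!\left[\ell(z_J,\lambda^{\star}(E))\,\big|\,E\right]=\sum_{j=1}^{N_{\text{cal}}+1}\omega_j\,\ell(z_j,\lambda^{\star}(E))\leqslant\alpha.
\end{equation*}
Averaging over $E$ then yields $\mathbb{E}[L'(\hat\lambda)]\leqslant\alpha$, which is \eqref{target coverage rate} under $\Gamma_{\text{te}}$.

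I expect the main obstacle to be the second step: carefully justifying the weighted-exchangeability representation in the covariate-shift setting — in particular, that the relevant likelihood ratio is a function of $z$ only through the sub-6G covariate, that conditioning on the unordered multiset makes the test index a discrete random variable with the normalized-weight distribution, and that this distribution matches \eqref{eq:importance weight} once the test point is merged with the calibration set. The remaining ingredients (monotonicity and right-continuity of $\ell(z,\cdot)$, the inclusion of feasible sets giving $\hat\lambda\geqslant\lambda^{\star}(E)$, and attainment of the infimum defining $\lambda^{\star}(E)$) are exactly as in the unweighted conformal risk control argument behind Proposition~\ref{prop: empirical risk function}, and amount to routine bookkeeping.
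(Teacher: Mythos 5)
Your proposal is correct and takes essentially the same route as the paper's Appendix~B: both introduce the full-data weighted risk that includes the test point (your $\lambda^{\star}(E)$ is exactly the paper's $\hat{\lambda}^{\downarrow}$), deduce $\hat{\lambda}\geqslant \lambda^{\star}(E)$ from $\ell'(\lambda)\leqslant 1$, and invoke the weighted-exchangeability lemma of Tibshirani et al.\ conditional on the bag of the $N_{\text{cal}}+1$ samples to conclude $\mathbb{E}[\ell'(\hat{\lambda})]\leqslant\alpha$. The only differences are cosmetic: you unpack the lemma through the conditional law of the test index $J$ and state the right-continuity/attainment step explicitly, whereas the paper cites the lemma as a black box and leaves that step implicit.
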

\IEEEproof The proof is deferred to Appendix~\ref{appendix:B}. 

\par In practical implementations, the true densities $\Gamma_{\text{cal}}(\cdot)$ and $\Gamma_{\text{te}}(\cdot)$ are unknown. 
Therefore, following \cite{tibshirani2019conformal}, we estimate the importance weights using a pre-trained probabilistic classifier $g(\cdot)$ optimized to distinguish between samples from calibration and test distribution.
Accordingly, the model $g(\cdot)$ outputs the probability that the test input $\widehat{\underline{\mathbf{H}}}'$ belongs to the test distribution. 
The estimated likelihood ratios $\hat{p}_i$ for $i=1,\dots,N_{\text{cal}}$ and $\hat{p}'$ are then computed as
\begin{equation}
    \hat{p}_i = \frac{g(\widehat{\underline{\mathbf{H}}}_i)}{1 - g(\widehat{\underline{\mathbf{H}}}_i)},~\text{and}~\hat{p}'= \frac{g(\widehat{\underline{\mathbf{H}}}')}{1 - g(\widehat{\underline{\mathbf{H}}}')}.
\end{equation}

\subsection{MmWave Beam Training}
For a given input $\widehat{\underline{\mathbf{H}}}$, the candidate beam subset $\mathcal{C}_{\hat{\lambda}}(\widehat{\underline{\mathbf{H}}})$ may have any cardinality between zero and $N_{\text{t}} S$.
Accordingly, the mmWave training stage operates as follows:
\begin{enumerate}[1)]
    \item If the set ${\mathcal C}_{\hat{\lambda}}(\widehat{\underline{\mathbf{H}}})$ is empty, SCAN-BEST selects the most-likely beam corresponding to the maximum probability in $\mathbf{P}$ as the final beam $\mathbf{f}$.
    \item If the set ${\mathcal C}_{\hat{\lambda}}(\widehat{\underline{\mathbf{H}}})$ has cardinality equal to one, the only beam in ${\mathcal C}_{\hat{\lambda}}(\widehat{\underline{\mathbf{H}}})$ is selected as the final beam $\mathbf{f}$.
    \item If the set ${\mathcal C}_{\hat{\lambda}}(\widehat{\underline{\mathbf{H}}})$ has cardinality larger than one, mmWave beam training is performed to select final beam $\mathbf{f}$.    
\end{enumerate}

\par MmWave training leverages the transmission of pilot symbols in the uplink of the mmWave band.
To elaborate, let $s_{\text{p},m}=\sqrt{P_{\text{s}}/M} $ denote the uplink pilot signal at the $m$-th subcarrier, where \( P_{\text{s}} \) represents the total power allocated for pilot transmission.  
When employing the beam $\mathbf{w}\in \mathcal{C}_{\hat{\lambda}}(\widehat{\underline{\mathbf{H}}})$, the corresponding received signal at the $m$-th subcarrier, denoted as \( y_{\mathbf{w},m} \), is given by:
\begin{equation}
y_{\mathbf{w},m}={\mathbf w}^T \mathbf{h}_m~s_{\text{p},m}+ {\mathbf w}^T \mathbf{n}_m, \label{eq:y_w_k}
\end{equation}
where $\mathbf{n}_m\sim\mathcal{CN}(\mathbf{0},\sigma_{\text{n}}^2\mathbf{I}_{N_{\text{t}}})$ denotes the noise vector with variance $\sigma_{\text{n}}^2$.
Based on the received signal \eqref{eq:y_w_k}, the final beam $\mathbf{f}$ is selected as the beam corresponding to the maximum received power:
\begin{equation}
{\mathbf f}=\mathop{\arg\max}\limits_{{\mathbf w}\in {\mathcal C}_{\hat{\lambda}}(\widehat{\underline{\mathbf{H}}})} \sum_{m=1}^{M}|y_{\mathbf{w},m}|^2. \label{eq:final_beam_selection}
\end{equation}
If multiple pilots are transmitted per beam, the average power is computed in \eqref{eq:final_beam_selection}.

\par The entire implementation of SCAN-BEST framework is summarized in Algorithm \ref{SCAN-BEST algorithm}.
\begin{algorithm}[!h]
    \small
    \renewcommand{\algorithmicrequire}{\textbf{Input:}}
    \renewcommand{\algorithmicensure}{\textbf{Output:}}
    \caption{SCAN-BEST}
    \label{SCAN-BEST algorithm}
    \begin{algorithmic}[1]
        \Require Sub-6G OFDM channel estimate $\underline{\widehat{\mathbf H}}$, target miscoverage rate $\alpha$ 
        \Ensure Near-field beam ${\mathbf f}$ 
        \State Produce the near-field beam probability matrix $\mathscr{P}_{\boldsymbol{\theta}^{\star}}(\widehat{\underline{\mathbf{H}}})$
        \State Construct near-field candidate beam subset ${\mathcal C}_{\hat{\lambda}}(\widehat{\underline{\mathbf{H}}})$ via \eqref{def candidate beam subset}
        \If{${\mathcal C}_{\hat{\lambda}}(\widehat{\underline{\mathbf{H}}})$ is empty, i.e., $\text{card}({\mathcal C}_{\hat{\lambda}}(\widehat{\underline{\mathbf{H}}}))=0$}
        \State Identify the beam with the largest probability in $\mathbf{P}=\mathscr{P}_{\boldsymbol{\theta}^{\star}}(\widehat{\underline{\mathbf{H}}})$ as the near-field beam $\mathbf{f}$
        \ElsIf{$\text{card}({\mathcal C}_{\hat{\lambda}}(\widehat{\underline{\mathbf{H}}}))=1$}
        \State Assign the only beam in $\mathcal{C}_{\hat{\lambda}}(\widehat{\underline{\mathbf{H}}})$ as $\mathbf{f}$
        \Else
        \State Conduct uplink beam training using the beams in ${\mathcal C}_{\hat{\lambda}}(\widehat{\underline{\mathbf{H}}})$ and then select the beam with the strongest received signal power as $\mathbf{f}$ using \eqref{eq:final_beam_selection}
        \EndIf
    \end{algorithmic}
\end{algorithm}

\section{Numerical Results}
In this section, we present experimental results to validate the performance of SCAN-BEST.

\subsection{Simulation Scenario and System Parameter Setup}
\label{sec:numerical results}
\begin{figure}[t]
\centering
\includegraphics[width=0.43\textwidth]{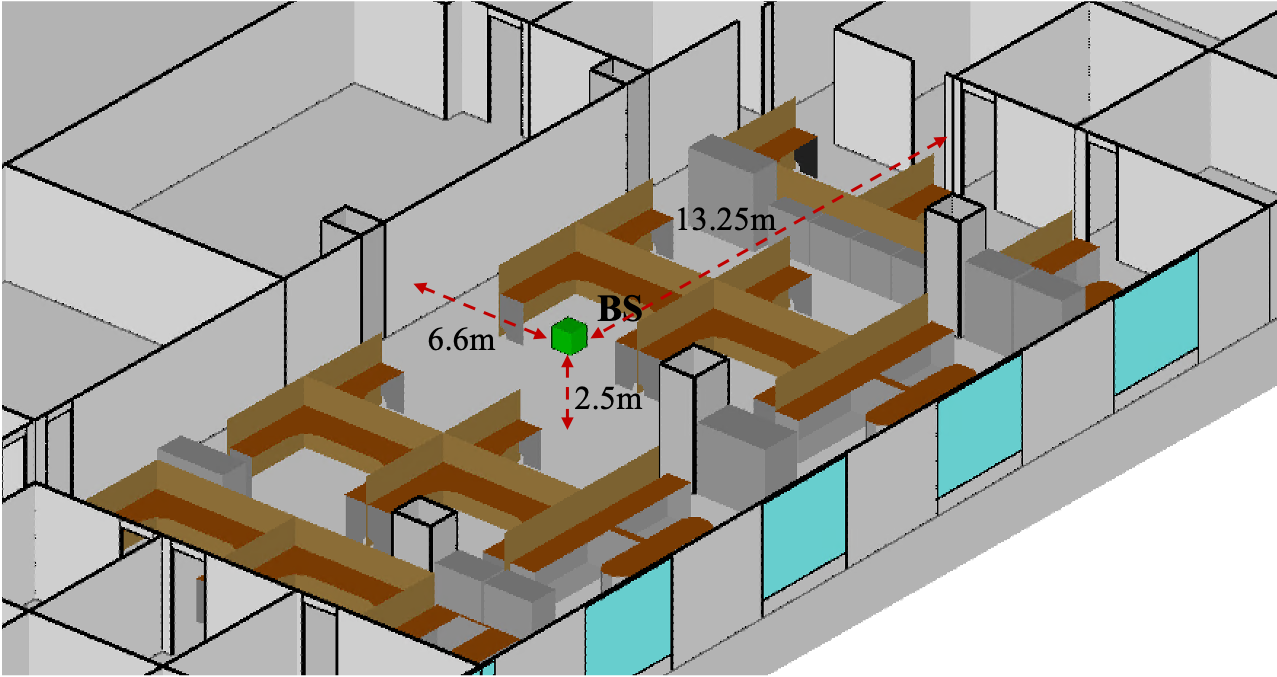}
\caption{An illustration of simulation setup.}
\label{indorr simulation region}
\end{figure}
\begin{table}[htbp]
\linespread{1.0}\selectfont
\caption{System Parameters}
\label{detailed parameters}
\centering
\setlength\tabcolsep{0.5mm}{
\begin{tabular}{c|c|c}
\toprule
Notations                                   & Parameters                       & Values    \\ 
\midrule
\midrule
$f_{\text{c}}$,~$\underline{f_{\text{c}}}$                    & \makecell{Operating frequency of \\ mmWave, sub-6G system  (GHz)}        & 73, 3.5 \\ \hline
$W$,~$\underline{W}$& \makecell{Bandwidth of \\mmWave, sub-6G system (MHz) }& 200, 80\\ \hline
$M$,~$\underline{M}$ & \makecell{Number of subcarriers \\ of mmWave, sub-6G system} & 64, 32  \\ \hline
$N_{\text{t}}$,~$\underline{N_{\text{t}}}$ & \makecell{Number of mmWave, sub-6G \\antennas at the BS} & 256, 16  \\ \hline
$P_{\text{t}}$,~$P_{\text{s}}$& \makecell{Total power of mmWave \\downlink, uplink pilot (dBm)}& 25,~25\\ \hline
$\underline{P_{\text{s}}}$ & \makecell{Power of sub-6G \\pilot signals (dBm) }& 10\\ \hline
$\sigma_{\text{n}}^2$,~$\underline{\sigma_{\text{n}}^2}$ & \makecell{Noise power of \\mmWave, sub-6G system (dBm) }& {\tiny\makecell{$-173.8 + 90 + 10\log_{10}(W)$\\$-173.8 + 90 + 10\log_{10}(\underline{W})$}}\\ \hline
$S$, ~$\beta$ & \makecell{Number of candidate distances\\and correlation parameter of $\mathcal{W}$}& 7,~1.6\\
\bottomrule
\end{tabular}}
\end{table}
As depicted in Fig.~\ref{indorr simulation region}, we consider an indoor scenario within an area of dimensions $13.2~\text{m} \times 26.5~\text{m}$.
The BS with height 2.5 m, is located at the center of the area.
The UE is assumed at to be the height of 1 m, and it can be located anywhere within the room, leading to the LoS condition or NLoS conditions.
More detailed system parameters are listed in Table~\ref{detailed parameters}, with exceptions marked explicitly in the text.

\par We first collect a total of 10,000 samples from the above indoor scenario via a ray-tracing software \cite{Remcom}, each of which consists of a pair of sub-6G channel estimate and true mmWave channel $\{\widehat{\underline{\mathbf H}},{\mathbf H}\}$.
These samples are randomly split for training, validating, calibration, and test, respectively, with the ratio of 50\%, 10\%, 20\%, and 20\%, respectively.

\subsection{Performance Metrics and Baselines}
\begin{table*}[t]
    \linespread{1.0}\selectfont
    \setlength{\tabcolsep}{3pt}
    \centering
    \caption{Baseline Methods and Our Method}
    \label{table:baseline_comparison}
    \begin{tabular}{@{}c c c c C{5cm}@{}}
        \toprule
        \textbf{Method Category} & \textbf{Method} & \textbf{Predictor} & \textbf{Candidate beam subset} & \multicolumn{1}{c}{\textbf{MmWave Pilot Usage}} \\
        \midrule
        \multirow{2}{*}{\makecell[c]{In-band mmWave\\training baselines}} 
            & EBS \cite{zhangFastNearFieldBeam2022} & \XSolidBrush & \XSolidBrush & Exhaustive search with $N_{\text{t}} S$ pilots \\
            & FNBS \cite{zhangFastNearFieldBeam2022}& \XSolidBrush & \XSolidBrush & Two-stage search with $N_{\text{t}} + 3S$ pilots \\
        \midrule
        \multirow{5}{*}{\makecell[c]{Sub 6GHz-based\\baselines}} 
            & \multirow{2}{*}{\textcolor{black}{CSLW-NBS} \cite{wu2024near}} & \multirow{2}{*}{\XSolidBrush} & \multirow{2}{*}{\XSolidBrush} & \textcolor{black}{MmWave pilot measurements for near-field channel estimation} \\
            \cmidrule{5-5}
            & SPBP + Top-$K$ & SPBP & Top-$K$ & \multirow{7}{*}{\makecell[c]{Limited mmWave beam training \\within candidate beam subset}} \\
            & SPBP + PS & SPBP & PS & \\
            & ADADT-P + Top-$K$ & ADADT-P & Top-$K$ & \\
            & ADADT-P + PS & ADADT-P & PS & \\
            \cmidrule{1-4}
            \multirow{2}{*}{\makecell[c]{\textbf{Our method}}}& \multirow{2}{*}{\makecell[c]{\textbf{SCAN-BEST}}} & \textcolor{black}{SPBP} & \multirow{2}{*}{\makecell[c]{CRC (optimally weighted \\for covariate shifts)}}& \\
            & & \textcolor{black}{ADADT-P} & & \\
        \bottomrule
    \end{tabular}
\end{table*}
To comprehensively assess the performance of proposed SCAN-BEST, we evaluate the following performance metrics:
\begin{itemize}
    \item \textbf{Achieved coverage rate}, i.e., the complement of the probability \eqref{target coverage rate}, quantifying the probability that a beam in $\mathcal{C}_{\hat{\lambda}}(\widehat{\underline{\mathbf{H}}})$ satisfies the condition in \eqref{eq:suboptimality condition} over the test dataset;
    \item \textbf{Achieved $\epsilon$-suboptimal rate}, i.e., the complement of the probability \eqref{eq:suboptimal_prob}, quantifying the probability that final selected beam $\mathbf{f}$ is $\epsilon$-suboptimal over the test dataset.
\end{itemize}

\par As for the predictor, we consider the following in the comparison:
\begin{itemize}
    \item \textbf{Augmented discrete angle and delay transformation-based prediction (ADADT-P):} ADADT-P begins by applying an augmented two-dimensional discrete Fourier transform \cite{sun2018single} to the sub-6G channel estimates to extract angle and delay information as key features. These features are then fed into a neural network to predict the optimal near-field beam probabilities.
    \item \textbf{Sub-6G pilots-based beam prediction (SPBP):} \textcolor{black}{Adapted from \cite{liuNMBEnetEfficientNearfield2024}, SPBP uses a neural network to directly predict the optimal near-field beam probabilities from received sub-6G pilot signals.}
\end{itemize}
Both the detailed configurations and training settings of the above predictors are provided in Appendix~\ref{appendix:C}.
\begin{figure}[htbp]
\centering 
\includegraphics[width=0.43\textwidth]{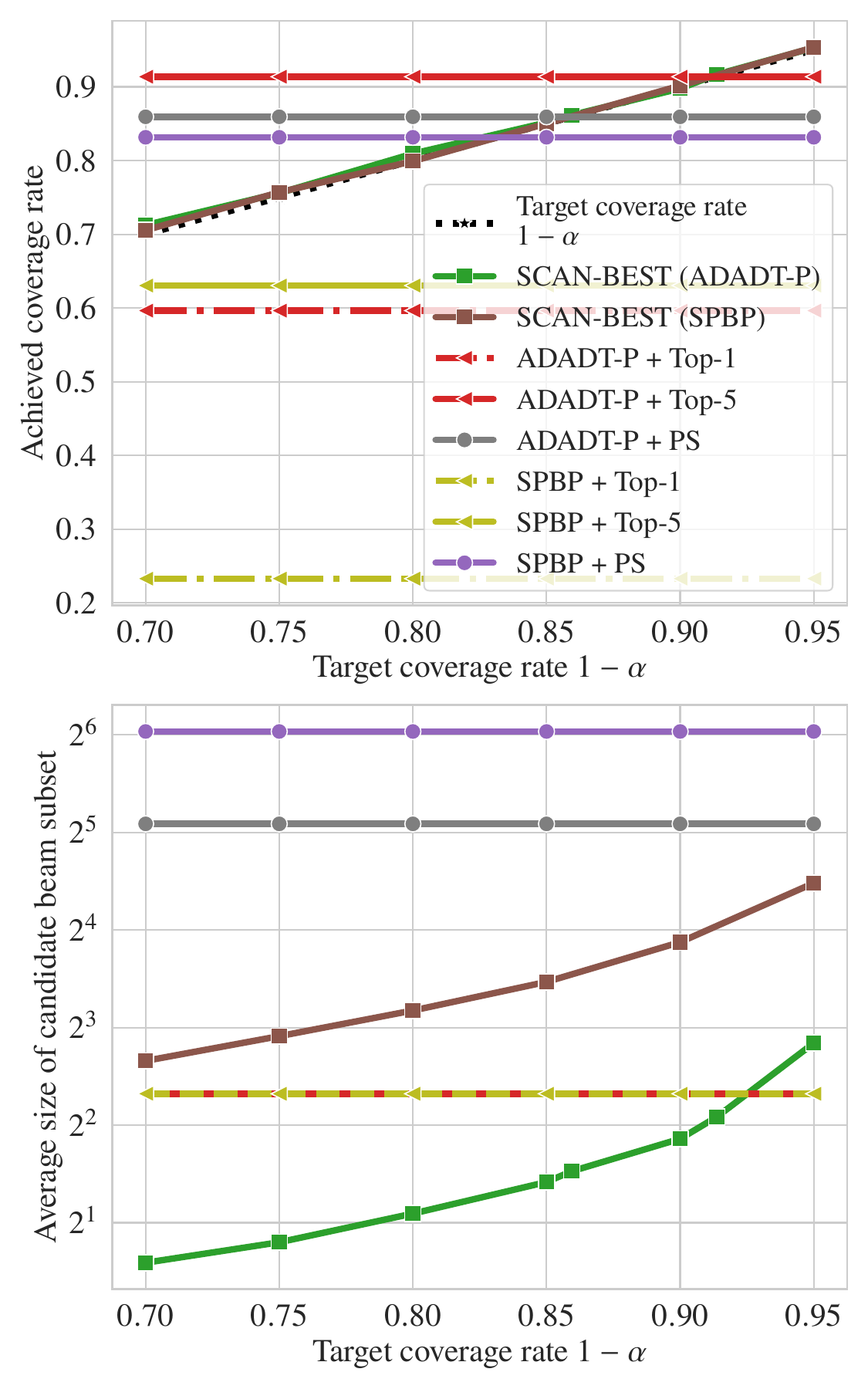}
\caption{Achieved coverage rates and average sizes of candidate beam subset of SCAN-BEST and baselines for $\epsilon=0.15$.}
\label{fig:coverage_and_size}
\end{figure}
\par As benchmarks, we consider two classical schemes to select the candidate beam subset based on the given predictor above:
\begin{itemize}
    \item \textbf{Top-$K$ selection \cite{alrabeiah2020deep}:} Identify the $K$ beams with the highest predictive probabilities to form the candidate beam subset. 
    \item \textbf{Probability sum (PS) selection \cite{ma2022deep}:} Choose the smallest set of beams whose cumulative predictive probability exceeds a predefined threshold (set to 0.99 unless otherwise specified) to form the candidate beam subset.
\end{itemize}
\par In Table~\ref{table:baseline_comparison}, we summarize different combinations of predictor and candidate beam subset selection scheme.
Furthermore, we include an additional baseline referred to as CSLW-NBS, which is adapted from the recently proposed sub-6 GHz-aided near-field channel estimation method, CSLW-BOMP, introduced in \cite{wu2024near}. 
Specifically, CSLW-NBS selects the beam from the near-field codebook that has the highest correlation with the near-field channel estimated by the CSLW-BOMP algorithm.
In addition, we consider two baseline methods that do not utilize sub-6 GHz information: Exhaustive Beam Search (EBS) and Fast Near-Field Beam Search (FNBS), both originally proposed in \cite{zhangFastNearFieldBeam2022}.

\subsection{Coverage Rate Guarantee of SCAN-BEST}
\begin{figure}[htbp]
    \centering 
    \includegraphics[width=0.45\textwidth]{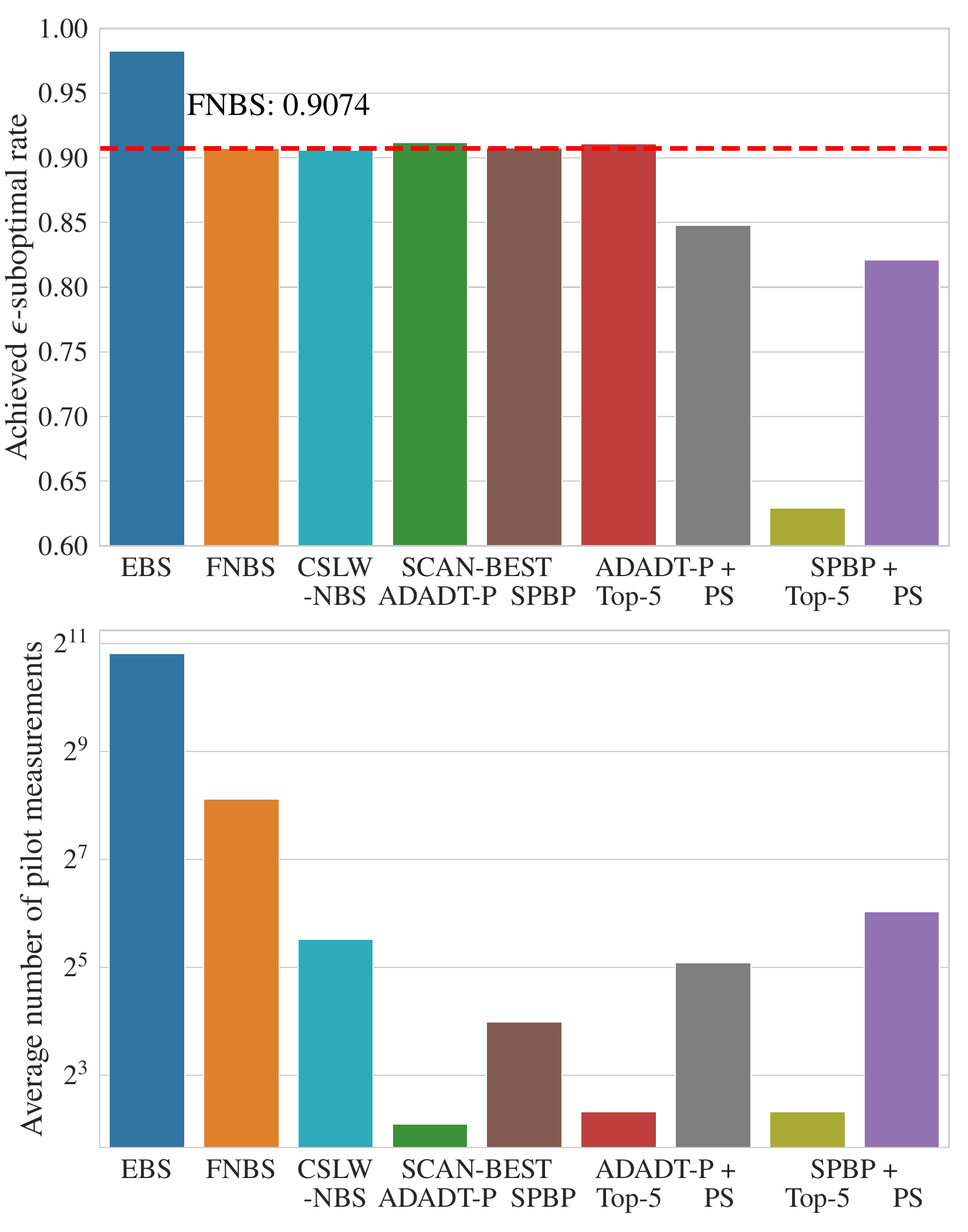}
    \caption{Comparison of the average suboptimality ratio and of the number of pilot symbols for SACN-BEST, mmWave-based, and sub 6GHz-based baselines.}
    \label{fig:suboptimality_ratio_and_num_pilot}
    \end{figure}
To start, we evaluate the reliability guarantees provided by SCAN-BEST in term of coverage rate \eqref{target coverage rate}.
Fig.~\ref{fig:coverage_and_size} presents the achieved coverage rates, along with the sizes of the candidate beam subsets for SCAN-BEST and for sub 6GHz-based baselines when the suboptimality target is $\epsilon=0.15$.
For results with other values of $\epsilon$, refer to Fig.~\ref{fig:coverage_and_size:part2} in Appendix~\ref{appendix:C}.
Confirming theoretical properties of Proposition \ref{prop: empirical risk function}, with the help of CRC, both the predictors ADADT-P and SPBP can be calibrated to the target coverage rate by dynamically increasing the size of candidate beam subset.
Thanks to the more use of a more effective predictor, SCAN-BEST (ADADT-P) requires a smaller candidate beam subset size as compared to SCAN-BEST (SPBP) to achieve the same coverage rate.

\par In contrast, without applying CRC and directly selecting the codeword with the highest predicted probability as the beam, both ADADT-P + Top-1 and SPBP + Top-1 achieve only fixed and relatively low coverage rates. 
When considering using alternative candidate beam subset selection schemes - Top-$K$ ($K>1$) or PS, the achieved coverage rates of corresponding baselines is indeed higher, but remain fixed and cannot adapt to varying target coverage rates.
Furthermore, when achieving the same coverage rate as ADADT-P + Top-5 and ADADT-P + PS, SCAN-BEST (ADADT-P) requires smaller candidate beam subsets. 
Specifically, the average sizes of the candidate beam subsets for ADADT-P + Top-5 and ADADT-P + PS=0.99 are 5 and 33.9 for $\alpha=0.086$ and $\alpha=0.106$, respectively, while those for SCAN-BEST (ADADT-P) are 4.2 and 2.8, respectively.
Note that the superiority of SCAN-BEST (ADADT-P) over SCAN-BEST (SPBP) stems from the fact that the employed ADADT explicitly extracts critical angle and delay features from sub-6~GHz channel estimates, which facilitates easier training and leads to improved model performance.
\begin{figure*}[!t]
\centering 
\subfloat[$1-\alpha=0.75$]{\includegraphics[width=0.33\textwidth]{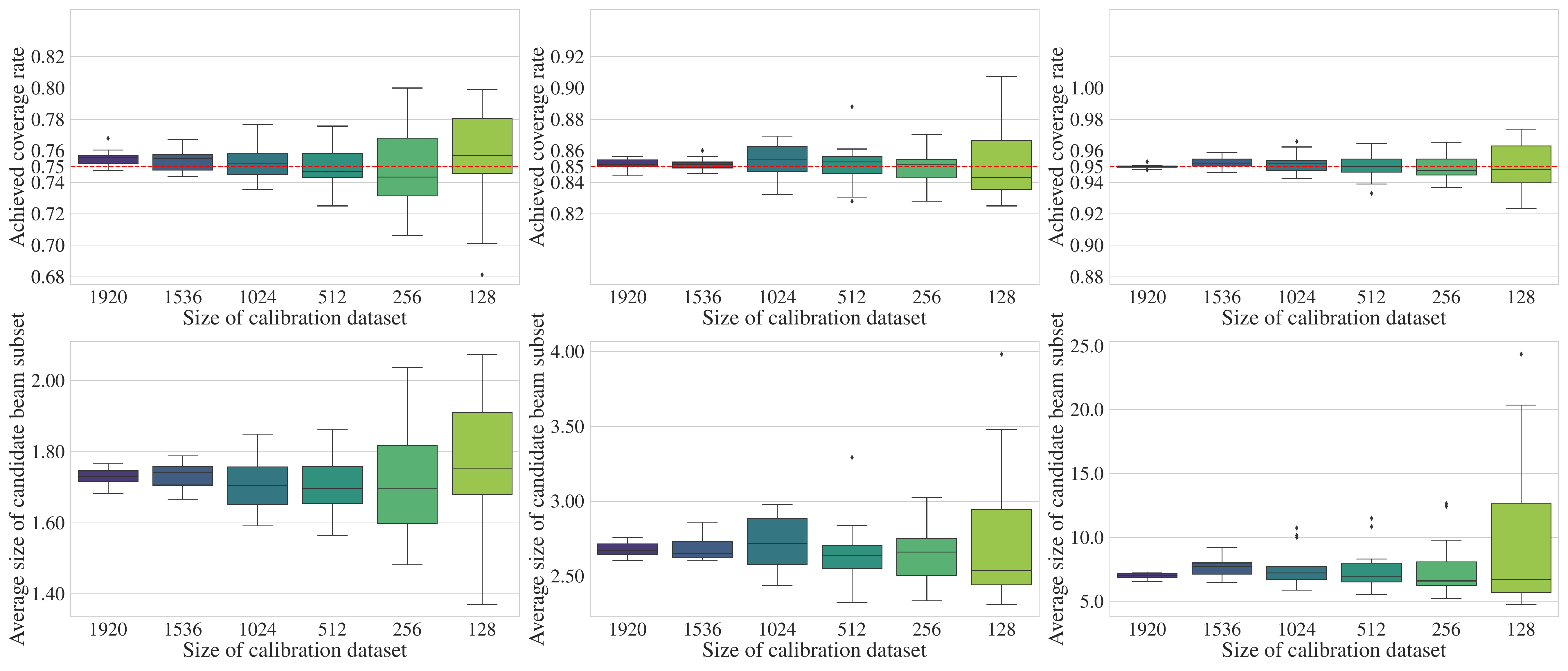}}\label{fig:size_of_calibration_dataset:75}
\subfloat[$1-\alpha=0.85$]{\includegraphics[width=0.33\textwidth]{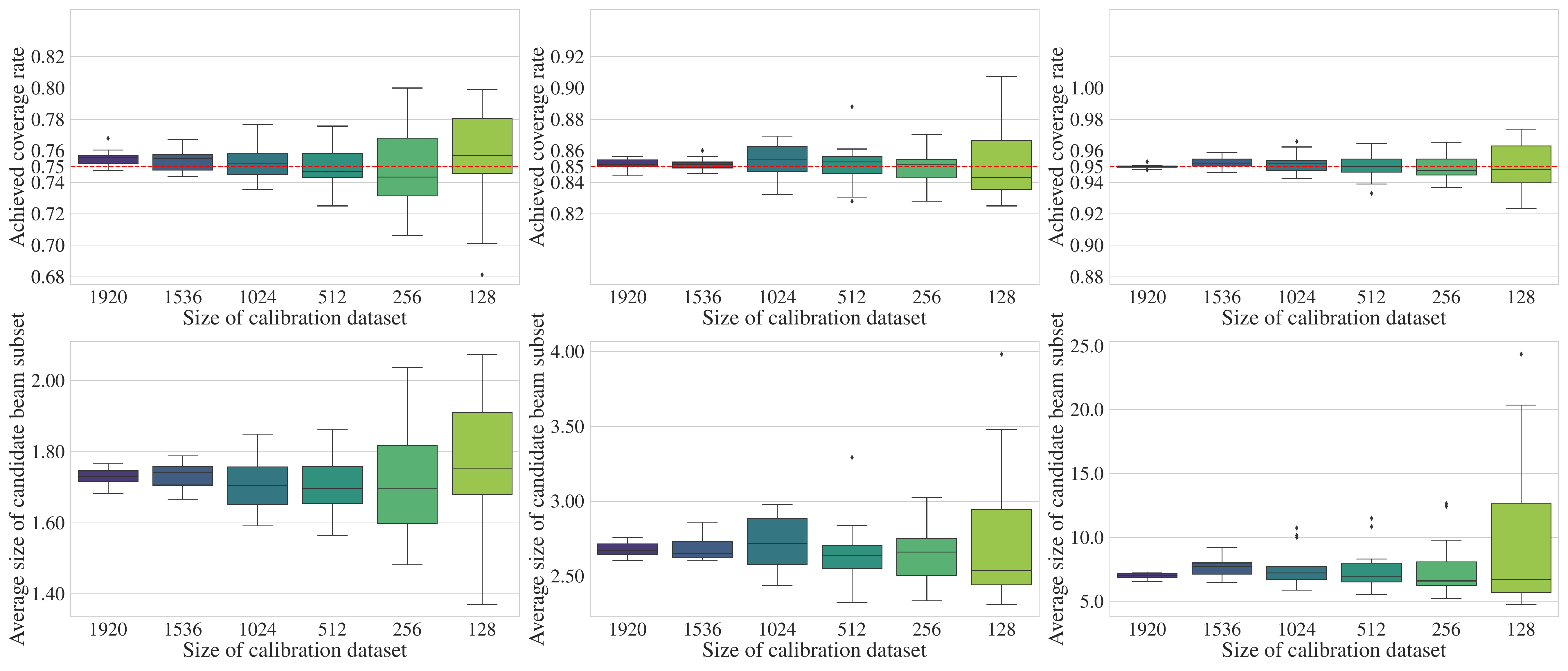}\label{fig:size_of_calibration_dataset:85}}
\subfloat[$1-\alpha=0.95$]{\includegraphics[width=0.33\textwidth]{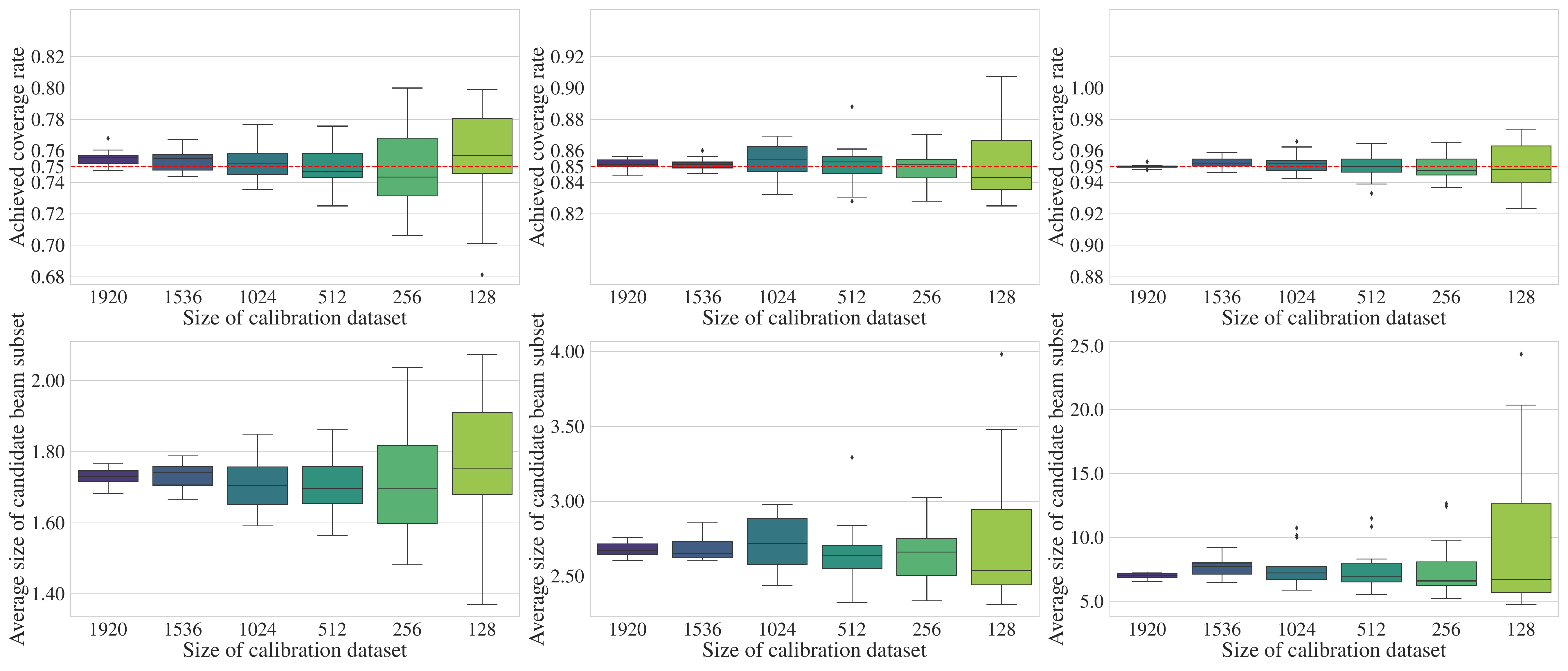}}\label{fig:size_of_calibration_dataset:95}
\caption{Box plots for the achieved coverage rate and the size of the candidate beam subset under different calibration dataset sizes, with target coverage rates $1-\alpha$ of (a) 0.75, (b) 0.85, and (c) 0.95. The red lines denote the target coverage rates $1-\alpha$, while ``M'' indicates the mean.}
\label{fig:size_of_calibration_dataset}
\end{figure*}

\par In addition, we implement a comparison among in-band mmWave training baselines, CSLW-NBS, and sub-6GHz-based deep learning based baselines, as shown in Fig.~\ref{fig:suboptimality_ratio_and_num_pilot}.
EBS achieves the best performance but requires an impractical 1792 mmWave pilots, while FNBS attains an $\epsilon$-suboptimal rate of 0.9074 using only 277 mmWave pilots.
Compared with the in-band mmWave training baselines, both CSLW-NBS and SCAN-BEST offer a more favorable trade-off between performance and pilot overhead. 
Specifically, CSLW-NBS achieves performance comparable to FNBS with only 46 mmWave pilots.
Furthermore, under a target coverage rate of $1 - \alpha = 0.91$, SCAN-BEST (ADADT-P) and SCAN-BEST (SPBS) achieve higher $\epsilon$-suboptimal rates than FNBS, while requiring only an average of 4.2 and 15.8 mmWave pilots, respectively.
These results highlight the superior performance of SCAN-BEST over both CSLW-NBS and in-band training baselines, demonstrating the effectiveness of deep learning techniques in feature extraction and predictive modeling.

\subsection{Impact of the Size of Calibration Dataset}
Here, we analyze the achieved coverage rate and candidate beam subset size when varying the calibration dataset sizes.
Fixing the suboptimality parameter $\epsilon = 0.15$ and the target coverage rates $1-\alpha = 0.75, 0.85, 0.95$, and taking SCAN-BEST (ADADT-P) as an example, Fig.~\ref{fig:size_of_calibration_dataset} presents key statistical characteristics of the achieved coverage rate and candidate beam subset size for different calibration dataset sizes.
The box plots, showing median (horizontal line), first inter-quartile intervals (box), and support (whiskers) are evaluated with 15 independent experiments.
The results show that, regardless of the dataset size, the average achieved coverage rate remains above the target coverage rate. 
Larger dataset sizes offer more reliable guarantees for the coverage rate and a more stable size of the candidate beam subset.
Finally, higher target coverage rates generally demand smaller calibration datasets.
For instance, for the same calibration dataset size $N_{\text{cal}}$, the inter-quartile ranges of achieved coverage rates at $1-\alpha=0.95$ are narrower than those at $1-\alpha=0.75$ and $1-\alpha=0.85$.

\subsection{Impact of Sub-6G System Parameters}
\subsubsection{Impact of Sub-6G Channel Estimation Power}
\begin{figure*}[htbp]
    \centering 
    \includegraphics[width=0.98\textwidth]{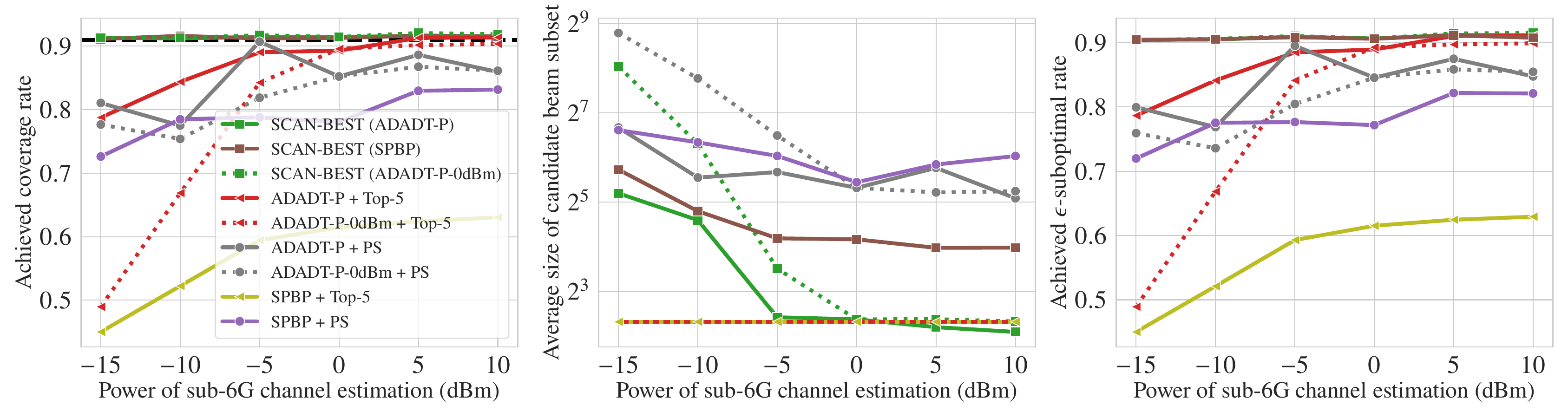}
    \caption{Achieved coverage rates, average sizes of candidate beam subset, and average suboptimality ratios for SCAN-BEST and baselines under different powers of sub-6G channel estimation.}
    \label{fig:sub_6G_power}
\end{figure*}
\begin{figure*}[htbp]
    \centering 
    \includegraphics[width=0.98\textwidth]{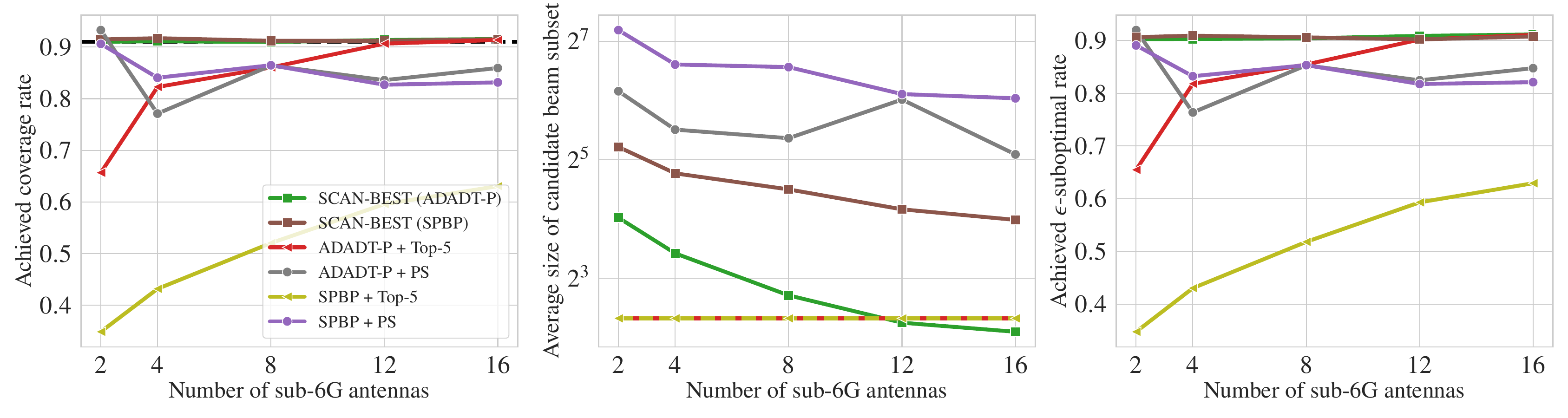}
    \caption{Achieved coverage rates, average sizes of candidate beam subset, and average suboptimality ratios for SCAN-BEST and baselines under different numbers of sub-6G antennas.}
    \label{fig:sub_6G_antennas}
\end{figure*}
Following the previous setting of $1-\alpha=0.91$ and $\epsilon=0.15$, we investigate the impact of the sub-6G channel estimation power, $\underline{P_{\text{s}}}$, on the performance of SCAN-BEST. 
Fig.~\ref{fig:sub_6G_power} shows that both SCAN-BEST (ADADT-P) and SCAN-BEST (SPBP) reliably maintain coverage rate guarantee $1-\alpha=0.91$, regardless of the power $\underline{P_{\text{s}}}$.
When $\underline{P_{\text{s}}}$ decreases, the candidate beam subset dynamically expands to satisfy the target coverage rate.
Furthermore, we assess the generalization capability\footnote{We have also evaluated its generalization capability to untrained scenarios (Appendix~\ref{app:unseen_scenario}) and its performance in outdoor scenarios (Appendix~\ref{app:outdoor_scenario}).}
of SCAN-BEST by applying ADADT-P trained at $\underline{P_{\text{s}}}=0~\text{dBm}$, denoted by ``SCAN-BEST (ADADT-P-0dBm)'', to test on datasets characterized by different power $\underline{P_{\text{s}}}$.
It exhibits similar performance to SCAN-BEST, which is attributed to the fact that CRC is a model-free calibration approach.
In contrast, baselines using the Top-$K$ and PS candidate beam subsets inevitably experience performance degradation or fluctuation as the power $\underline{P_{\text{s}}}$ varies.

\subsubsection{Impact of the Number of Sub-6G Antennas}
Similarly, under the setting of $1-\alpha=0.91$ and $\epsilon = 0.15$, SCAN-BEST (ADADT-P) and SCAN-BEST (SPBP) provide a reliable coverage rate guarantee of 0.91, regardless of the number of antennas $\underline{N_{\text{t}}}$, as shown in Fig.~\ref{fig:sub_6G_antennas}.
In scenarios with low number of antennas $\underline{N_{\text{t}}}$, which leads to poor angle resolution, both SCAN-BEST (ADADT-P) and SCAN-BEST (SPBP) dynamically expand their candidate beam subsets. 
However, SCAN-BEST (ADADT-P) maintains smaller candidate beam subsets due to the efficiency of the pre-processed ADADT.

\subsection{Impact of Statistical Shifts between Calibration and Test Datasets}
\begin{figure}[htbp]
\centering 
\includegraphics[width=0.48\textwidth]{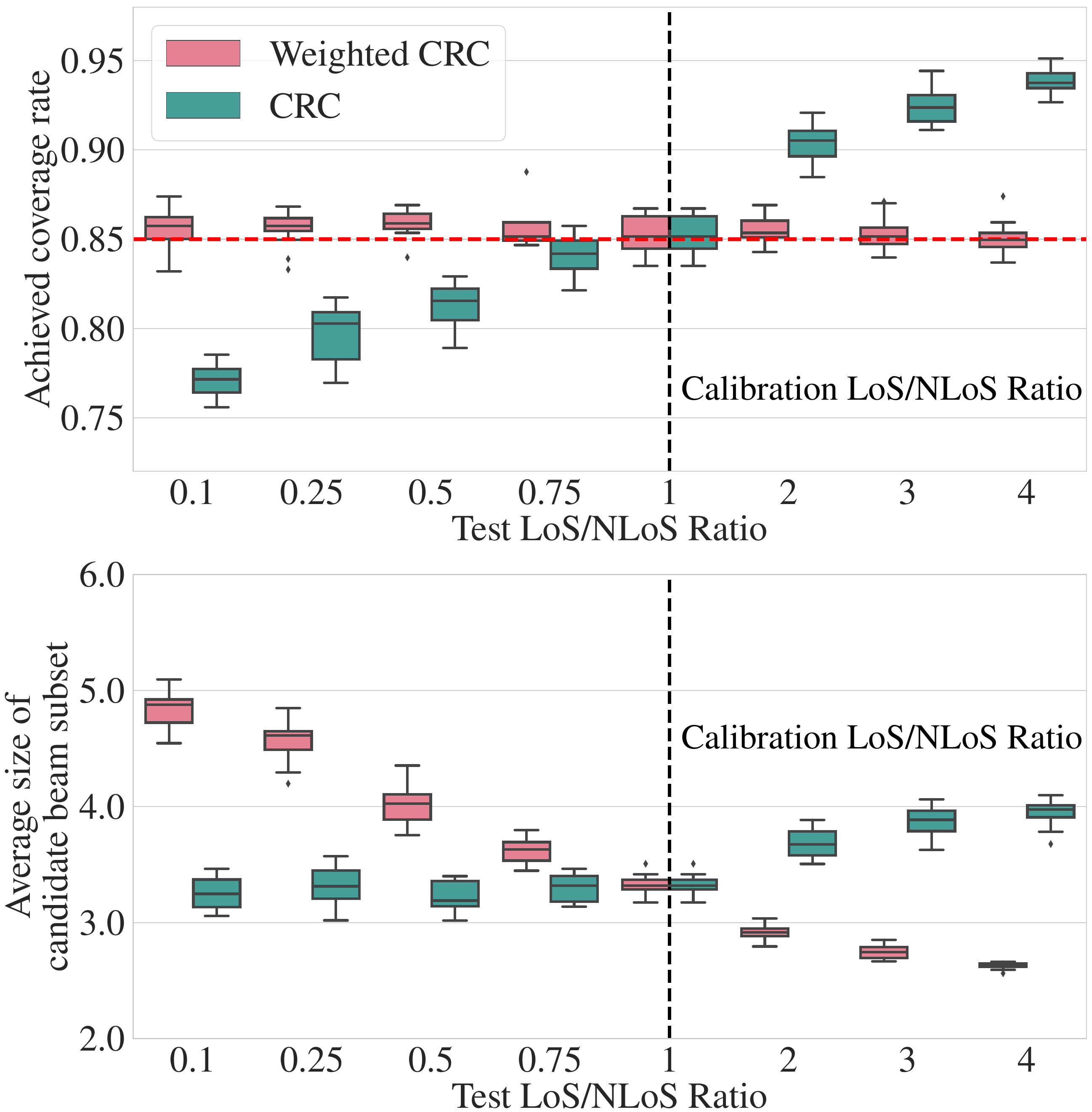}
\caption{Box plots of the achieved coverage rate and the size of the candidate beam subset under different LoS/NLoS ratios in the test data, with the LoS/NLoS ratio in the calibration data fixed at 1 (black line). The red lines denote the target coverage rates $1-\alpha$.}
\label{fig:weighted_CRC}
\end{figure}
In this subsection, we simulate a scenario in which the calibration and test data differ in the probability of LoS connectivity to validate the advantages of weighted CRC as compared to conventional CRC.
Fixing the suboptimality parameter $\epsilon = 0.15$ and the target coverage rate $1-\alpha = 0.85$, and taking ADADT-P as the predictor, Fig.~\ref{fig:weighted_CRC} presents the achieved coverage rate and the size of the candidate beam subset for CRC and weighted CRC as a function of the LoS/NLoS ratio in the test distribution, when the LoS/NloS ratio in the calibration distribution is fixed at 1.
For a LoS probability $p^{\text{LoS}}$, the LoS/NLoS ratio is defined as $p^{\text{LoS}}/(1-p^{\text{LoS}})$.
All box plots are evaluated over 15 independent experiments.

\par \textcolor{black}{As in \cite{tibshirani2019conformal}, the probabilistic classifier $g(\cdot)$ is trained by minimizing the cross-entropy loss between the predicted probabilities and the corresponding binary labels, where a label of 1 indicates that the sample originates from the test distribution, and a label of 0 indicates that it is from the calibration distribution.
The detailed configurations and training settings are provided in Appendix~\ref{appendix:C}.}
The results demonstrate that, regardless of the degree of misalignment between the calibration and test data, the weighted CRC method can effectively adapt to the test distribution, consistently achieving an average coverage rate higher than the target.
In particular, when the proportion of LoS samples is large, weighted CRC provides reliable coverage guarantees and achieves a more stable candidate beam subset size than for lower probability, as the mapping from sub-6G information to mmWave beams becomes less uncertain.
In contrast, CRC struggles to provide effective coverage in the covariate shift cases.
Specifically, when the LoS/NLoS ratio is low, CRC fails to meet the coverage requirement. 
Conversely, when the LoS/NLoS ratio is high, it tends to produce an overly large candidate beam subset, which is inefficient for achieving the target coverage rate.

\section{Conclusion}
\label{sec:conclusion}
In this paper, we have proposed SCAN-BEST, a novel theoretically principled framework to enhance near-field beam selection while ensuring a guaranteed coverage rate.
It wraps around any deep learning best-beam predictor, capturing the spatial-temporal correlation between sub-6G and mmWave channels and predicting of beam probabilities from sub-6G data. 
A CRC-based module, optionally weighted to cope with covariate shift, constructs a candidate beam subset with formal guarantees, which is then refined via limited mmWave beam training.
Extensive simulations show that SCAN-BEST ensures statistical suboptimality requirements and is robust to various sub-6G configurations. 
Future work may explore CRC-based beamforming with multi-modal sensing in high-mobility settings.

\appendices
\section{Proof of Proposition \ref{prop: empirical risk function} and Computation of Threshold $\hat{\lambda}$}
\subsection{Proof of Proposition \ref{prop: empirical risk function}}
\label{appendix:A}
Given the pretrained distribution $\mathscr{P}_{\boldsymbol{\theta}^{\star}}(\widehat{\underline{\mathbf{H}}})$ and the definition of the candidate beam subset $\mathcal{C}_{\lambda}(\widehat{\underline{\mathbf{H}}})$ in \eqref{def candidate beam subset}.
The miscoverage probability requirement in \eqref{target coverage rate} can be expressed as 
\begin{equation}
    \mathbf{Pr}\left(\nexists \mathbf{w} \in \mathcal{C}_{\lambda}(\widehat{\underline{\mathbf{H}}}) : r(\mathbf{w}, \mathbf{H}) \geqslant 1 - \epsilon \right)  \leqslant \alpha, \label{eq:miscoverage probability requirement}
\end{equation}
where the probability is taken over the underlying distribution of $\{\widehat{\underline{\mathbf{H}}},\mathbf{H}\}$.

\par We define a loss function $\ell(\widehat{\underline{\mathbf{H}}},\mathbf{H},\lambda)\in \{0,1\}$ as 
\begin{equation}
\ell(\widehat{\underline{\mathbf{H}}},\mathbf{H},\lambda)=\mathds{1}\left(\nexists \mathbf{w} \in \mathcal{C}_{\lambda}(\widehat{\underline{\mathbf{H}}}): r(\mathbf{w},\mathbf{H})\geqslant 1-\epsilon\right). \label{loss function}
\end{equation}
As a result, the miscoverage probability requirement in \eqref{target coverage rate} is equivalent to
\begin{equation}
\mathbf{Pr}\left(\ell(\widehat{\underline{\mathbf{H}}},\mathbf{H},\lambda)=1\right)\leqslant \alpha,
\end{equation}
which can be expressed as
\begin{equation}
    \label{risk control problem}
    \mathbb{E}\left[\ell(\widehat{\underline{\mathbf{H}}},\mathbf{H},\lambda)\right] \leqslant \alpha.
\end{equation} 
Thus, the miscoverage probability requirement in \eqref{target coverage rate} is equivalent to the risk control problem in \eqref{risk control problem}.
\par The empirical risk function $\hat{R}(\lambda)$ in \eqref{eq: empirical risk function} can be written as 
\begin{equation}
\hat{R}(\lambda)=\frac{1}{N_{\text{cal}}} \sum_{i=1}^{N_{\text{cal}}} \ell(\widehat{\underline{\mathbf{H}}}_i,\mathbf{H}_i,\lambda). \label{eq: empirical risk function loss form}
\end{equation}
With a little abuse of notation, we denote $\ell(\widehat{\underline{\mathbf{H}}}_i,\mathbf{H}_i,\lambda)$ as $\ell_{i}(\lambda)$, hence $\hat{R}(\lambda)$ can be expressed as $\hat{R}(\lambda)=\sum_{i=1}^{N_{\text{cal}}} \ell_{i}(\lambda)/N_{\text{cal}}$.
Here, we introduce an auxiliary risk function $\hat{R}_{\text{f}}(\lambda)$ accounting for both calibration samples $\{\widehat{\underline{\mathbf{H}}}_i,\mathbf{H}_i\}_{i=1}^{N_{\text{cal}}}$ and a future sample $\{\widehat{\underline{\mathbf{H}}}_{N_{\text{cal}}+1},\mathbf{H}_{N_{\text{cal}}+1}\}$, which is given by 
\begin{equation}
    \hat{R}_{\text{f}}(\lambda)=\frac{1}{N_{\text{cal}}+1} \sum_{i=1}^{N_{\text{cal}}+1} \ell_i(\lambda).
\end{equation}
\begin{proposition}
The functions $\ell_i(\lambda)$ is monotonically non-increasing w.r.t. $\lambda$.
\label{prop: monotonically non-increasing}
\end{proposition}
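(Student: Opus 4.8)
The plan is to unpack the definition of the loss $\ell_i(\lambda) = \mathds{1}(\nexists \mathbf{w} \in \mathcal{C}_{\lambda}(\widehat{\underline{\mathbf{H}}}_i) : r(\mathbf{w},\mathbf{H}_i) \geqslant 1-\epsilon)$ and trace how each ingredient behaves as $\lambda$ grows. The key structural fact is that the candidate set $\mathcal{C}_{\lambda}(\widehat{\underline{\mathbf{H}}}_i)$ in \eqref{def candidate beam subset} is \emph{monotonically non-decreasing} in $\lambda$: it collects all beams whose score $u(n,s,\widehat{\underline{\mathbf{H}}}_i)$ is at most $\lambda$, so raising the threshold can only add beams, never remove them. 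Formally, for $\lambda_1 \leqslant \lambda_2$ one has $\mathcal{C}_{\lambda_1}(\widehat{\underline{\mathbf{H}}}_i) \subseteq \mathcal{C}_{\lambda_2}(\widehat{\underline{\mathbf{H}}}_i)$ directly from the inequality $u \leqslant \lambda_1 \implies u \leqslant \lambda_2$.

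The next step is to observe that the event inside the indicator — ``there exists no $\epsilon$-suboptimal beam in the candidate set'' — is an \emph{anti-monotone} predicate in the set: if a larger set fails to contain any $\epsilon$-suboptimal beam, then so does every smaller subset of it. Combining this with the set monotonicity, if $\lambda_1 \leqslant \lambda_2$ and $\ell_i(\lambda_2) = 1$, i.e. $\mathcal{C}_{\lambda_2}(\widehat{\underline{\mathbf{H}}}_i)$ contains no $\epsilon$-suboptimal beam, then since $\mathcal{C}_{\lambda_1}(\widehat{\underline{\mathbf{H}}}_i) \subseteq \mathcal{C}_{\lambda_2}(\widehat{\underline{\mathbf{H}}}_i)$, the smaller set also contains no $\epsilon$-suboptimal beam, so $\ell_i(\lambda_1) = 1$. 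Contrapositively, $\ell_i(\lambda_1) = 0 \implies \ell_i(\lambda_2) = 0$. Hence $\ell_i(\lambda_1) \geqslant \ell_i(\lambda_2)$ whenever $\lambda_1 \leqslant \lambda_2$, which is precisely the claimed monotone non-increasing property. Since $\ell_i$ takes only values in $\{0,1\}$, this says it steps down from $1$ to $0$ at most once.

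There is essentially no hard part here — the result is a one-line consequence of the "more slack, bigger set" structure of the conformal construction. The only thing to be careful about is the direction of the score: the paper uses a \emph{negatively oriented} score in \eqref{eq:unlikelihood}, so smaller $u$ means "more likely optimal" and the set grows with $\lambda$ (rather than shrinking), which is the orientation that makes $\ell_i$ non-increasing rather than non-decreasing. I would state the two monotonicity observations (set inclusion in $\lambda$; anti-monotonicity of the "no good beam" event under inclusion) as the two bullet points of the argument and then chain them. This monotonicity is exactly what is needed downstream so that the infimum defining $\hat\lambda$ in \eqref{eq: hat lambda optimization} is well-behaved and the CRC guarantee of Proposition~\ref{prop: empirical risk function} goes through.
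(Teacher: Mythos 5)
Your proposal is correct and follows essentially the same route as the paper: establish the nesting $\mathcal{C}_{\lambda_1}(\widehat{\underline{\mathbf{H}}}_i)\subseteq \mathcal{C}_{\lambda_2}(\widehat{\underline{\mathbf{H}}}_i)$ for $\lambda_1\leqslant\lambda_2$ from the threshold definition, then conclude $\ell_i(\lambda_1)\geqslant\ell_i(\lambda_2)$. You merely make explicit the anti-monotonicity of the ``no $\epsilon$-suboptimal beam'' event under set inclusion, which the paper leaves as an immediate observation.
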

\begin{proof}
For an arbitrary pair $\{\widehat{\underline{\mathbf{H}}}_i,\mathbf{H}_i\}$, assume $\lambda_1 < \lambda_2$, and consider two candidate beam subsets $\mathcal{C}_{\lambda_1}(\widehat{\underline{\mathbf{H}}}_i)=\{\mathbf{w}_{n,s}\in\mathcal{W}:u(n,s,\widehat{\underline{\mathbf{H}}}_i)\leqslant \lambda_1\}$ and $\mathcal{C}_{\lambda_2}(\widehat{\underline{\mathbf{H}}}_i)=\{\mathbf{w}_{n,s}\in\mathcal{W}:u(n,s,\widehat{\underline{\mathbf{H}}}_i)\leqslant \lambda_2\}$.
For any $\mathbf{w}_{n,s} \in \mathcal{C}_{\lambda_1}(\widehat{\underline{\mathbf{H}}}_i)$, it follows $u(n,s,\widehat{\underline{\mathbf{H}}}_i)\leqslant \lambda_1\leqslant \lambda_2$, which implies $\mathbf{w}_{n,s} \in \mathcal{C}_{\lambda_2}(\widehat{\underline{\mathbf{H}}}_i)$.
Hence, we have the implication
\begin{equation}
\lambda_1<\lambda_2 \Longrightarrow \mathcal{C}_{\lambda_1}(\widehat{\underline{\mathbf{H}}}_i)\subset \mathcal{C}_{\lambda_2}(\widehat{\underline{\mathbf{H}}}_i). \notag
\end{equation}
It can be also seen that the inequality $\ell_i(\lambda_1)\geqslant \ell_i(\lambda_2)$ holds.
\end{proof}

\par We also have
\begin{align}
\hat{R}_{\text{f}}(\lambda)&=\frac{N_{\text{cal}}}{N_{\text{cal}}+1}\hat{R}(\lambda)+\frac{\ell_{N_{\text{cal}}+1}(\lambda)}{N_{\text{cal}}+1}\notag\\
&\mathop{\leqslant}^{(a)} \frac{N_{\text{cal}}}{N_{\text{cal}}+1}\hat{R}(\lambda)+\frac{1}{N_{\text{cal}}+1}, \label{eq: risk function inequality}
\end{align}
where condition (a) holds because $\ell_{N_{\text{cal}}+1}(\lambda)\in \{0,1\}\leqslant 1$.
Then, using the optimized $\hat{\lambda}$ in \eqref{eq: hat lambda optimization}, we obtain that 
\begin{equation}
\frac{N_{\text{cal}}}{N_{\text{cal}}+1}\hat{R}(\hat{\lambda})+\frac{1}{N_{\text{cal}}+1}\leqslant \alpha.
\end{equation}
According to the inequality in \eqref{eq: risk function inequality}, this implies that 
\begin{gather}
    \hat{R}_{\text{f}}(\hat{\lambda})\leqslant \frac{N_{\text{cal}}}{N_{\text{cal}}+1}\hat{R}(\hat{\lambda})+\frac{1}{N_{\text{cal}}+1}\leqslant \alpha \implies \notag\\
\hat{R}_{\text{f}}(\hat{\lambda})\leqslant \alpha.
\end{gather}
Introducing $\hat{\lambda}^\downarrow =\inf\{\lambda\in \mathbb{R}: \hat{R}_{\text{f}}(\lambda)\leqslant \alpha\}$, we know that $\hat{\lambda}^\downarrow\leqslant \hat{\lambda} $.
Thus, according to Proposition~\ref{prop: monotonically non-increasing}, we obtain
\begin{equation}
    \mathbb{E}[\ell_{N_{\text{cal}}+1}(\hat{\lambda})]\leqslant \mathbb{E}[\ell_{N_{\text{cal}}+1}(\hat{\lambda}^\downarrow)]. \label{eq: inequality_hat_lambda}
\end{equation}

In mathematics, a bag (also called a multiset) is a collection where elements can appear more than once.
\begin{lemma}[From \cite{tibshirani2019conformal}]
If the random variables $x_1, \ldots, x_n,\allowbreak x_{n+1}$ are exchangeable, and conditioned on the bag of realization of $\{x_1, \ldots, x_n, x_{n+1}\}$, the random variable $x_{n+1}$ is uniformly distributed in the set $\{x_1, \ldots, x_n, x_{n+1}\}$.
\end{lemma}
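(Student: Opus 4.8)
The plan is to derive the statement from the definition of (finite) exchangeability alone: the joint law of $(x_1,\dots,x_{n+1})$ is invariant under every permutation of the indices. The crucial structural fact is that the bag $B=\{x_1,\dots,x_{n+1}\}$ is itself a permutation-invariant function of $(x_1,\dots,x_{n+1})$, so conditioning on $B$ is compatible with this symmetry; informally, knowing the multiset of realized values tells us nothing about which index carries which value.

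First I would record the discrete case as a warm-up. Fix a realized multiset $B=\{v_1,\dots,v_{n+1}\}$ (repetitions allowed), and let $o$ be any ordering of these values. Exchangeability gives $\Pr((x_1,\dots,x_{n+1})=o)=\Pr((x_1,\dots,x_{n+1})=(v_1,\dots,v_{n+1}))$, a number that does not depend on $o$. Hence, conditioned on the event that the bag equals $B$, the ordered vector $(x_1,\dots,x_{n+1})$ is uniform over the distinct orderings of $B$. Reading off the final coordinate, $\Pr(x_{n+1}=v \mid B)$ equals the fraction of distinct orderings of $B$ having $v$ in the last slot, which a one-line count shows to equal $m_v/(n+1)$, where $m_v$ is the multiplicity of $v$ in $B$; that is, $x_{n+1}\mid B$ has the empirical distribution of $B$.

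For the general case I would run the same symmetry argument at the level of conditional expectations. Let $\mathcal{G}=\sigma(B)$ and fix any bounded measurable test function $\phi$. Since $B$ is permutation-invariant and $(x_1,\dots,x_{n+1})$ is exchangeable, the pair $(x_i,B)$ has the same joint law for every $i$, so $\mathbb{E}[\phi(x_i)\mid\mathcal{G}]$ is, up to a null set, the same random variable for all $i=1,\dots,n+1$; call it $Y$. Averaging over $i$ and using that $\frac{1}{n+1}\sum_{i=1}^{n+1}\phi(x_i)$ is $\mathcal{G}$-measurable, we get $Y=\mathbb{E}[\frac{1}{n+1}\sum_{i=1}^{n+1}\phi(x_i)\mid\mathcal{G}]=\frac{1}{n+1}\sum_{i=1}^{n+1}\phi(x_i)$. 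In particular $\mathbb{E}[\phi(x_{n+1})\mid\mathcal{G}]=\frac{1}{n+1}\sum_{i=1}^{n+1}\phi(x_i)$ for every such $\phi$, which is precisely the assertion that the conditional law of $x_{n+1}$ given the bag is the uniform distribution over $\{x_1,\dots,x_{n+1}\}$.

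The step I expect to be the main obstacle is the measure-theoretic bookkeeping behind the claim that $\mathbb{E}[\phi(x_i)\mid\mathcal{G}]$ does not depend on $i$: this relies on the fact that conditional expectations are determined almost surely by the joint distribution of $(\phi(x_i),B)$, and one must intersect the finitely many exceptional null sets (over $i$, and, if one wants the full conditional law rather than one test function, over a countable convergence-determining family of functions $\phi$) to obtain a single good event. Once that is arranged the argument is routine, and in the discrete setting it collapses to the elementary counting identity $m_v/(n+1)$ above.
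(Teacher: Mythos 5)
Your argument is correct, but there is nothing in the paper to compare it against: the lemma is imported verbatim from the cited reference \cite{tibshirani2019conformal} and used as a black box, with no proof given in the paper itself. Your proposal supplies the standard proof underlying that citation. The discrete warm-up (all orderings of a fixed bag are equally likely, so $\Pr(x_{n+1}=v\mid B)=m_v/(n+1)$) is exactly right, and it correctly interprets ``uniform on the set'' as the empirical distribution of the multiset, with multiplicities. The general argument is also sound: since the bag $B$ is a symmetric function of the vector, exchangeability (applied to the transposition of $i$ and $n+1$) gives $(x_i,B)\stackrel{d}{=}(x_{n+1},B)$, hence any version $g(B)$ of $\mathbb{E}[\phi(x_{n+1})\mid\sigma(B)]$ is also a version of $\mathbb{E}[\phi(x_i)\mid\sigma(B)]$ for each $i$; averaging and using that $\frac{1}{n+1}\sum_{i=1}^{n+1}\phi(x_i)$ is $\sigma(B)$-measurable (it is a function of the empirical measure) yields $\mathbb{E}[\phi(x_{n+1})\mid\sigma(B)]=\frac{1}{n+1}\sum_{i=1}^{n+1}\phi(x_i)$, and a countable convergence-determining family of test functions upgrades this to the full conditional law, as you note. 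The only caveat worth stating explicitly is that $\sigma(B)$ should be understood as the $\sigma$-field generated by the empirical measure $\frac{1}{n+1}\sum_{i=1}^{n+1}\delta_{x_i}$ (equivalently, the symmetric functions of the sample), which is precisely what makes both the invariance step and the measurability of the symmetric average immediate; with that reading, your proof is complete and is essentially the proof one would find in the cited source.
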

Since the calibration samples $\{\widehat{\underline{\mathbf{H}}}_i,\mathbf{H}_i\}_{i=1}^{N_{\text{cal}}}$ and the future sample $\{\widehat{\underline{\mathbf{H}}}_{N_{\text{cal}}+1},\mathbf{H}_{N_{\text{cal}}+1}\}$ are i.i.d. from the same scenario, they are exchangeable.
According to the exchangeability-preserving theorem (Theorem 3 in \cite{kuchibhotla2020exchangeability}), the functions $\ell_1(\lambda),\dots,\ell_{N_{\text{cal}}}(\lambda),\ell_{N_{\text{cal}}+1}(\lambda)$ are also exchangeable, and we obtain 
\begin{equation}
    \ell_{N_{\text{cal}}+1}(\lambda)\sim\text{Uniform}\left(\ell_1(\lambda),\dots,\ell_{N_{\text{cal}}}(\lambda),\ell_{N_{\text{cal}}+1}(\lambda)\right).
\end{equation}
Conditioned on the bag of realization of $\{\ell_1(\lambda),\dots,\ell_{N_{\text{cal}}}(\lambda),\allowbreak \ell_{N_{\text{cal}}+1}(\lambda)\}$, when $\lambda=\hat{\lambda}^\downarrow$, we have 
\begin{gather}
    \mathbb{E}\left[\ell_{N_{\text{cal}}+1}(\hat{\lambda}^\downarrow)\right]=\frac{1}{N_{\text{cal}}+1}\sum\limits_{i=1}^{N_{\text{cal}}+1}\ell_i(\hat{\lambda}^\downarrow)=\hat{R}_{\text{f}}(\hat{\lambda}^\downarrow)\mathop{\leqslant}\limits^{(b)} \alpha \implies \notag \\
     \mathbb{E}[\ell_{N_{\text{cal}}+1}(\hat{\lambda}^\downarrow)]\leqslant \alpha,
\end{gather}
where the condition (b) holds according to the definition of $\hat{\lambda}^\downarrow$.
Finally, according to \eqref{eq: inequality_hat_lambda}, we obtain the following result:
\begin{equation}
    \mathbb{E}[\ell_{N_{\text{cal}}+1}(\hat{\lambda})]\leqslant \alpha.
\end{equation}

\subsection{A Low Complexity Algorithm to Determine $\hat{\lambda}$}
\label{appendix: hat lambda}
Recall that $\hat{R}(\lambda) = \frac{1}{N_{\text{cal}}}\sum_{i=1}^{N_{\text{cal}}} \ell(\widehat{\underline{\mathbf{H}}}_i, \mathbf{H}_i, \lambda)$ as in~(34).
For any given pair $\{\widehat{\underline{\mathbf{H}}}_i, \mathbf{H}_i\}$, the function $\ell(\widehat{\underline{\mathbf{H}}}_i, \mathbf{H}_i, \lambda)$ is an indicator function w.r.t. $\lambda$.
For each such pair in the calibration dataset, we define $\lambda_i$ as the minimum $\lambda$ such that $\ell(\widehat{\underline{\mathbf{H}}}_i,\mathbf{H}_i,\lambda)=0$, i.e.,
\begin{equation}
\ell(\widehat{\underline{\mathbf{H}}}_i,\mathbf{H}_i,\lambda)=\mathds{1}(\lambda<\lambda_i)=
\begin{cases}
    1,& \lambda <\lambda_i, \\
    0,& \text{if } \lambda \geqslant \lambda_i
\end{cases}.
\end{equation}
Specifically, the value of $\lambda_i$ can be obtained as
\begin{equation}
\lambda_i=\inf\left\{u(n,s,\underline{\widehat{\mathbf{H}}}_i):\mathbf{w}_{n,s}\in \mathcal{F}_{\epsilon}(\mathbf{H}_i)\right\},
\end{equation}
where $\mathcal{F}_{\epsilon}(\mathbf{H}_i)=\{\mathbf{w}\in\mathcal{W}:r({\mathbf{w}},\mathbf{H}_i) \geqslant 1-\epsilon\}$ is the set of beam candidates that achieve sufficiently high performance with respect to $\mathbf{H}_i$.

\par We thus obtain that 
\begin{equation}
\hat{R}(\lambda)=\frac{1}{N_{\text{cal}}}\sum\limits_{j=1}^{N_{\text{cal}}}\mathds{1}(\lambda< \lambda_i)=1-\frac{1}{N_{\text{cal}}}\sum\limits_{j=1}^{N_{\text{cal}}}\mathds{1}(\lambda\geqslant \lambda_i).
\end{equation}
That is, $1-\hat{R}(\lambda)$ represents the empirical cumulative distribution function of the set $\{\lambda_j\}_{j=1}^{N_{\text{cal}}}$, and can be interpreted as the normalized rank of $\lambda$ within this set. 
To determine $\hat{\lambda}$ that satisfies the condition in~(22), we first sort $\{\lambda_j\}_{j=1}^{N_{\text{cal}}}$ in an ascending order, and $\hat{\lambda}$ is then selected as the $(1-\frac{\left\lfloor \alpha N_{\text{cal}} + \alpha - 1 \right\rfloor}{N_{\text{cal}}})$-quantile of sorted values, i.e.,
\begin{equation}
    \hat{\lambda}= \text{quantile}\left(1-\frac{\left\lfloor \alpha N_{\text{cal}}+\alpha - 1 \right\rfloor}{N_{\text{cal}}};\{\lambda_i\}_{i=1}^{N_{\text{cal}}}\right).
\end{equation}
This quantile-based procedure enables efficient selection of $\hat{\lambda}$ to meet the desired risk level $\alpha$.

\section{Proof of Proposition \ref{prop: weighted_CRC}}
\label{appendix:B}
For the samples in the calibration dataset, we denote their loss as $\ell_i(\lambda)$ for $i=1,...,N_{\text{cal}}$, while, for the future test sample $\{\widehat{\underline{\mathbf{H}}}',\mathbf{H}'\}$, we denote its loss as $\ell'(\lambda)$. 
We have $\tilde{R}(\lambda)=\sum\nolimits_{i=1}^{N_{\text{cal}}}\omega_i\ell_i(\lambda)$.
Define an auxiliary risk function $\tilde{R}_{\text{f}}(\lambda)$ accounting for both calibration samples $\{\widehat{\underline{\mathbf{H}}}_i,\mathbf{H}_i\}_{i=1}^{N_{\text{cal}}}$ and the future test sample $\{\widehat{\underline{\mathbf{H}}}',\mathbf{H}'\}$ as
\begin{eqnarray}
    \tilde{R}_{\text{f}}(\lambda)=\sum_{i=1}^{N_{\text{cal}}}\omega_i\ell_i(\lambda)+\omega'\ell'(\lambda).
\end{eqnarray}

\par Accordingly, we obtain the inequality
\begin{equation}
\tilde{R}_{\text{f}}(\lambda)\mathop{\leqslant}^{(a)} \tilde{R}(\lambda)+\omega', \label{eq: weighted risk function inequality}
\end{equation}
where condition (a) holds because $\ell'(\lambda)\in \{0,1\}\leqslant 1$.
When using the optimized $\hat{\lambda}$ in \eqref{eq: weighted CRC threshold}, we have 
\begin{equation}
    \tilde{R}(\hat{\lambda})+\omega'\leqslant \alpha.
\end{equation}
According to the inequality in \eqref{eq: weighted risk function inequality}, this indicates that
\begin{equation}
\tilde{R}_{\text{f}}(\lambda)\leqslant \alpha.
\end{equation}
Introducing $\hat{\lambda}^\downarrow=\inf\{\lambda\in \mathbb{R}: \tilde{R}_{\text{f}}\leqslant \alpha\}$, it implies $\hat{\lambda}^\downarrow\leqslant \hat{\lambda}$.
Then, according to Proposition~\eqref{prop: monotonically non-increasing}, we conclude
\begin{equation}
    \mathbb{E}[\ell'(\hat{\lambda})]\leqslant \mathbb{E}[\ell'(\hat{\lambda}^\downarrow)]. \label{eq: weighted_inequality_hat_lambda}
\end{equation}

\begin{lemma}[Lemma 3 of \cite{tibshirani2019conformal}]
Let $z = (x, y)$. Suppose the random variables $z_1, \dots, z_n$ are from the calibration distribution $\Gamma_{\textup{cal}}(z)$, and $z'$ is from the test distribution $\Gamma_{\textup{te}}(z)$. 
$z_1,\dots,z_n,z'$ are conditioned on the bag of realization of $\{z_1,\dots,z_n,z'\}$.
Define the weights
\begin{equation}
\omega_i = \frac{p_i}{\sum_{j=1}^{n} p_j + p'}, \quad \omega' = \frac{p'}{\sum_{j=1}^{n} p_j + p'},
\end{equation}
where
\begin{equation}
p_i = \frac{\Gamma_{\textup{te}}(z_i)}{\Gamma_{\textup{cal}}(z_i)}, \quad p' = \frac{\Gamma_{\textup{te}}(z')}{\Gamma_{\textup{cal}}(z')},
\end{equation}
for $i = 1, \dots, n$. Then, the test sample $z'$ is distributed as
\begin{equation}
z' \sim \sum_{i=1}^{n} \omega_i \delta_{z_i} + \omega' \delta_{z'},
\end{equation}
where $\delta_z$ denotes the Dirac measure at $z$. Moreover, for any non-random loss function $\ell(\cdot)$ over $z$, it follows that
\begin{equation}
\ell(z') \sim \sum_{i=1}^{n} \omega_i \delta_{\ell(z_i)} + \omega' \delta_{\ell(z')}.
\end{equation}
\end{lemma}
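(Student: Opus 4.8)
The plan is to derive this lemma from the \emph{weighted exchangeability} of the augmented sample $(z_1,\dots,z_n,z')$, in the spirit of \cite{tibshirani2019conformal}. First I would write the joint density of the augmented sample with respect to a common dominating measure: since $z_1,\dots,z_n$ are i.i.d.\ from $\Gamma_{\textup{cal}}$ and $z'$ is independent from $\Gamma_{\textup{te}}$, this joint density evaluated at $(a_1,\dots,a_n,a')$ equals $\prod_{i=1}^{n}\Gamma_{\textup{cal}}(a_i)\cdot\Gamma_{\textup{te}}(a')$. The key manipulation is to multiply and divide by $\Gamma_{\textup{cal}}(a')$ and regroup, obtaining $\big(\prod_{i=1}^{n+1}\Gamma_{\textup{cal}}(a_i)\big)\cdot\big(\Gamma_{\textup{te}}(a')/\Gamma_{\textup{cal}}(a')\big)$ with $a_{n+1}\triangleq a'$; that is, the joint density factors into a fully symmetric function of all $n+1$ points times the single likelihood ratio evaluated at the test point. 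This step needs $\Gamma_{\textup{te}}\ll\Gamma_{\textup{cal}}$ so that the ratios $p_i,p'$ are finite almost surely and the weights in \eqref{eq:importance weight} are well defined.

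Next I would condition on the unordered bag $B=\{z_1,\dots,z_n,z'\}$ and compute, for each element $b\in B$, the conditional probability that $b$ occupies the ``test slot'' while the remaining $n$ elements fill the calibration slots. Because the symmetric factor $\prod_i\Gamma_{\textup{cal}}(\cdot)$ is invariant under relabelling of the points, and because the $n+1$ realizations are almost surely distinct, every such assignment contributes the same symmetric factor and the same number of orderings of the calibration slots; hence the conditional probability of the event $\{z'=b\}$ is proportional to the likelihood ratio at $b$. Normalizing over the $n+1$ elements of $B$ gives $\mathbf{Pr}(z'=z_i\mid B)=p_i/\big(\sum_{j=1}^{n}p_j+p'\big)=\omega_i$ and $\mathbf{Pr}(z'=z'\mid B)=p'/\big(\sum_{j=1}^{n}p_j+p'\big)=\omega'$, which is precisely the claimed mixture $z'\sim\sum_{i=1}^{n}\omega_i\delta_{z_i}+\omega'\delta_{z'}$.

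Finally, the statement for the loss follows by pushing this distributional identity forward through the fixed, non-random, measurable map $\ell(\cdot)$: if a random element is distributed as $\sum_i\omega_i\delta_{z_i}+\omega'\delta_{z'}$, then its image under $\ell$ is distributed as $\sum_i\omega_i\delta_{\ell(z_i)}+\omega'\delta_{\ell(z')}$, because applying a deterministic function to a discrete mixture merely transports the point masses. The main obstacle I anticipate is not the algebra but making the ``conditioning on the bag'' step fully rigorous: one must set up the conditional law given the exchangeable $\sigma$-algebra (equivalently, the order statistics) carefully, account for possible ties through a multiplicity-bookkeeping argument in the general case (ties occur with probability zero in the dominated-continuous case), and verify the absolute-continuity hypothesis under which $p_i$ and $p'$ are well defined. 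Once these measure-theoretic points are settled, the remainder is the symmetrization bookkeeping sketched above.
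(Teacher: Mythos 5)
The paper does not actually prove this lemma: it is imported verbatim as Lemma~3 of \cite{tibshirani2019conformal} and used as a black box in Appendix~B, so there is no in-paper argument to compare against. Your proof is correct and is essentially the standard weighted-exchangeability argument from that reference: factoring the joint density $\prod_{i=1}^{n}\Gamma_{\textup{cal}}(a_i)\,\Gamma_{\textup{te}}(a')$ into a permutation-symmetric factor $\prod_{i=1}^{n+1}\Gamma_{\textup{cal}}(a_i)$ times the single likelihood ratio at the test slot, conditioning on the unordered bag so that the probability of each element occupying the test slot is proportional to its likelihood ratio, and then pushing the resulting discrete mixture forward through the deterministic map $\ell(\cdot)$. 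You also correctly flag the two hypotheses the paper leaves implicit, namely $\Gamma_{\textup{te}}\ll\Gamma_{\textup{cal}}$ so that $p_i,p'$ are well defined, and the treatment of ties (handled in general by the permutation-sum bookkeeping of \cite{tibshirani2019conformal}, and vacuous almost surely in the continuous case); with those points made rigorous, your argument is a complete proof of the cited lemma.
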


\par In the considered scenario, the conditional distribution of $\mathbf{H}$ given $\widehat{\underline{\mathbf{H}}}$ is assumed to remain the same in the calibration and test datasets \cite{alrabeiah2020deep}. 
Under this covariate shift assumption, the weights $\omega_1, \dots, \omega'$ can be computed as in \eqref{eq:importance weight}. 
Based on the loss function $\ell_{\lambda}(\cdot)$ defined in \eqref{loss function}, and setting $\lambda = \hat{\lambda}^\downarrow$, we have
\begin{gather}
    \ell'(\hat{\lambda}^\downarrow) \sim \sum_{i=1}^{n} \omega_i \delta_{\ell_i(\hat{\lambda}^\downarrow)} + \omega' \delta_{\ell'(\hat{\lambda}^\downarrow)} \implies \notag \\
    \mathbb{E}\left[\ell'(\hat{\lambda}^\downarrow)\right] = \sum_{i=1}^{n} \omega_i \ell_i(\hat{\lambda}^\downarrow) + \omega' \ell'(\hat{\lambda}^\downarrow) = \tilde{R}_{\text{f}}(\hat{\lambda}^\downarrow) \mathop{\leqslant}\limits^{(b)} \alpha \implies \notag\\
    \mathbb{E}\left[\ell'(\hat{\lambda}^\downarrow)\right] \leqslant \alpha,
\end{gather}
where condition (b) follows from the definition of $\hat{\lambda}^\downarrow$.
Finally, by the inequality in \eqref{eq: weighted_inequality_hat_lambda}, we conclude:
\begin{equation}
    \mathbb{E}[\ell'(\hat{\lambda})] \leqslant \alpha.
\end{equation}

\section{Additional Experimental Materials}
\subsection{Network Configurations and Training Parameters}
\label{appendix:C}
\begin{figure*}[t]
    \centering
    \setlength{\abovecaptionskip}{1pt} 
    \subfloat[$\epsilon=0.20$]{\includegraphics[width=0.30\textwidth]{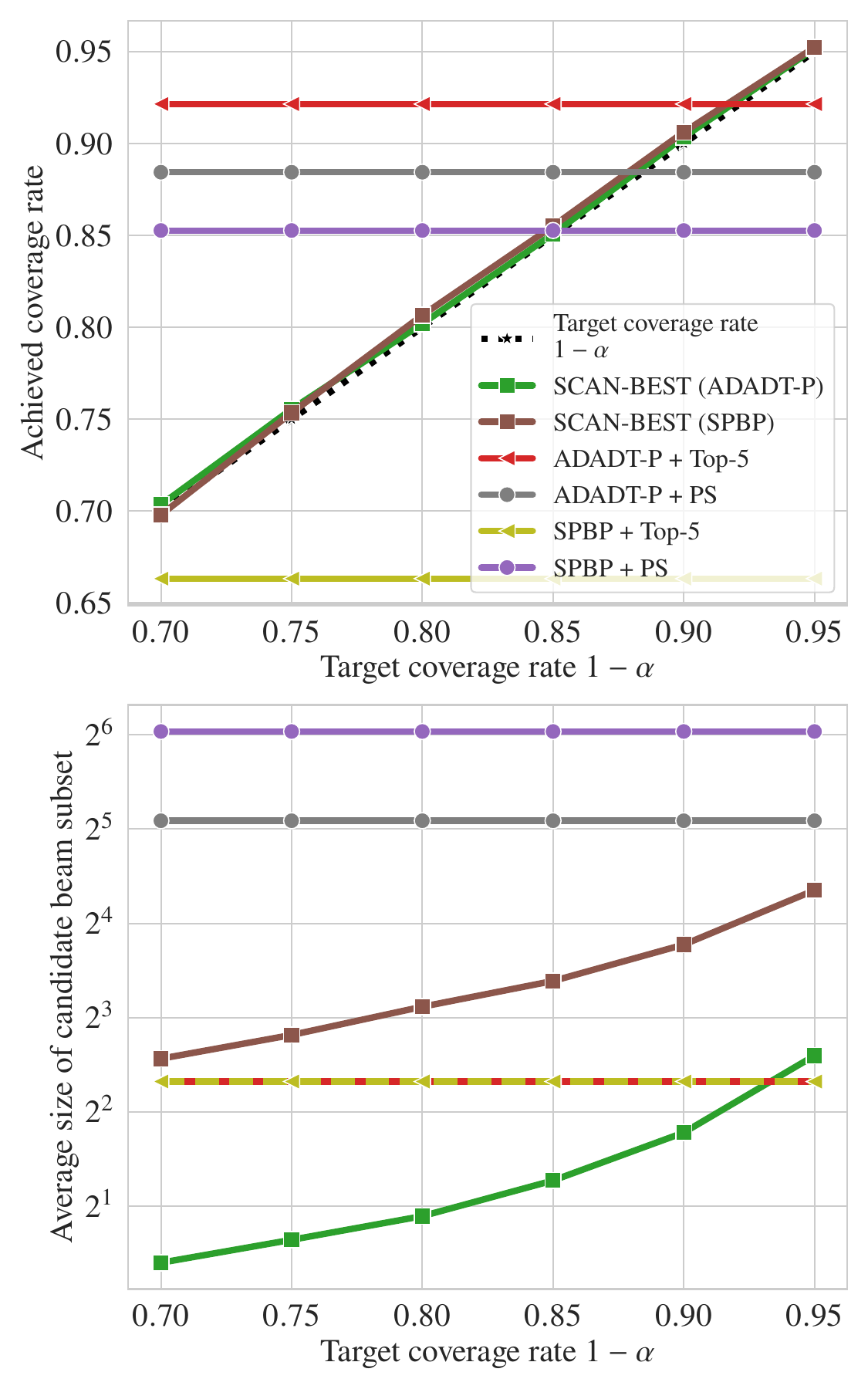}}
    \subfloat[$\epsilon=0.10$]{\includegraphics[width=0.30\textwidth]{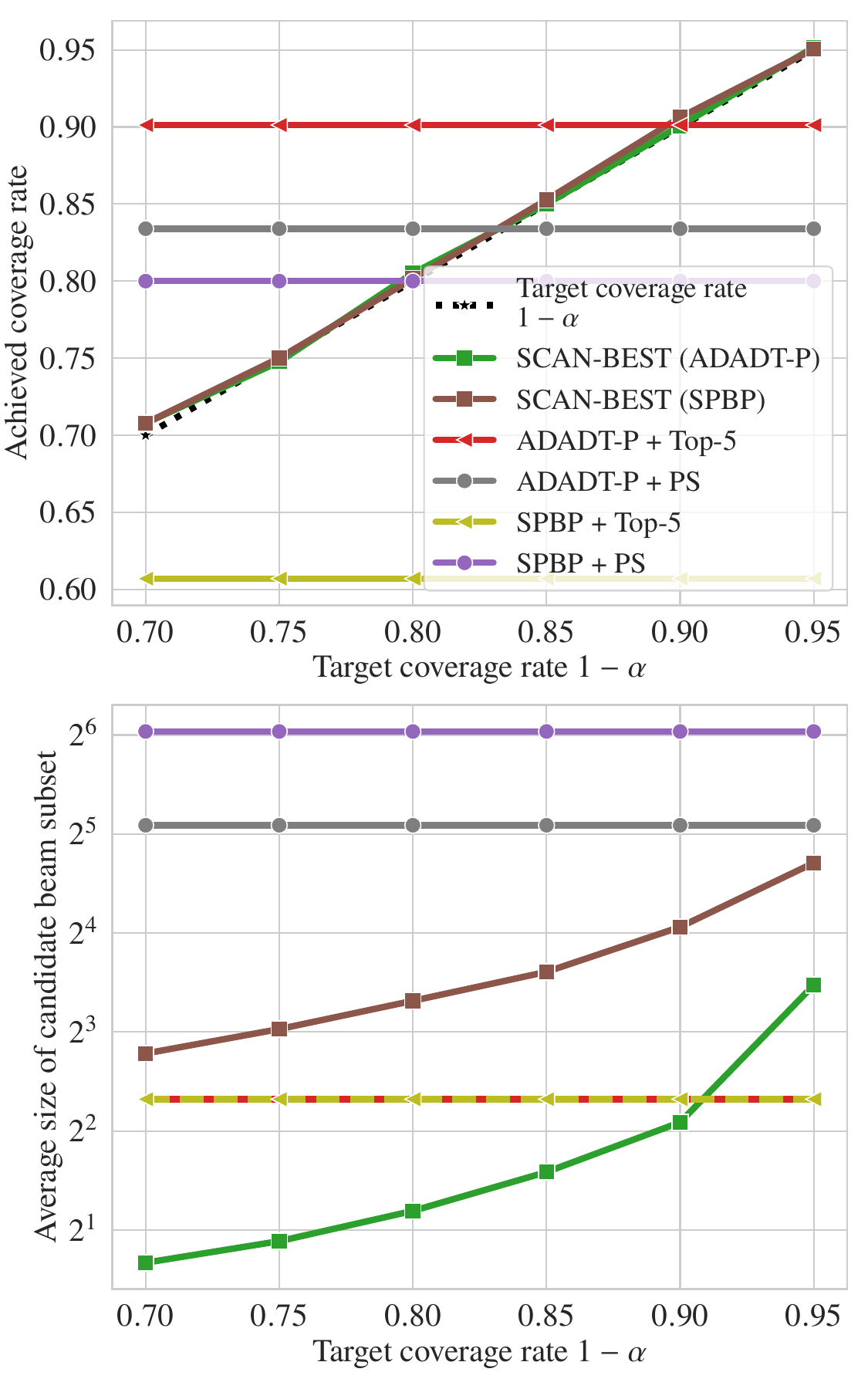}}\vspace{-6pt}\\
    \subfloat[$\epsilon=0.05$]{\includegraphics[width=0.30\textwidth]{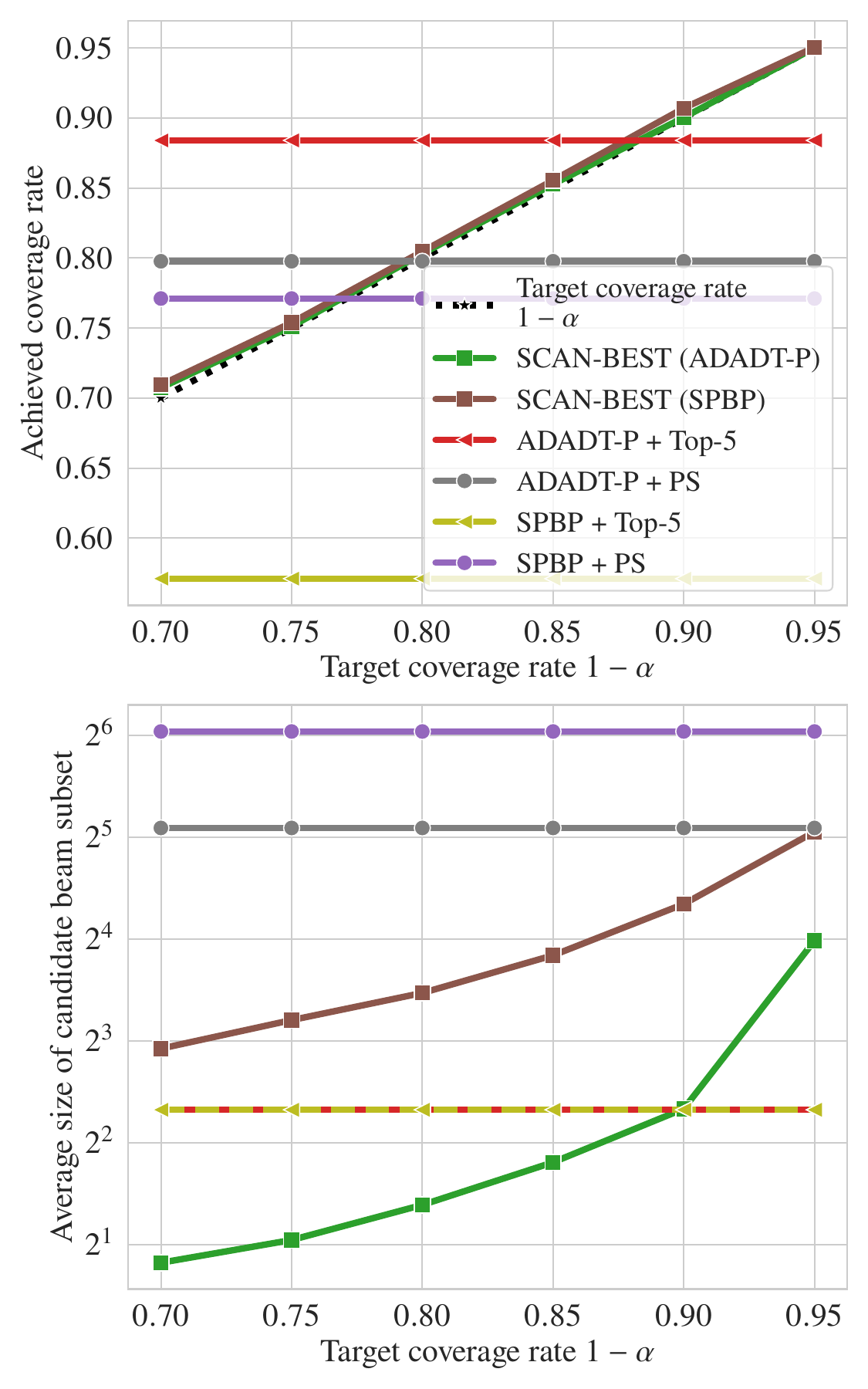}}
    \subfloat[$\epsilon=0.00$]{\includegraphics[width=0.30\textwidth]{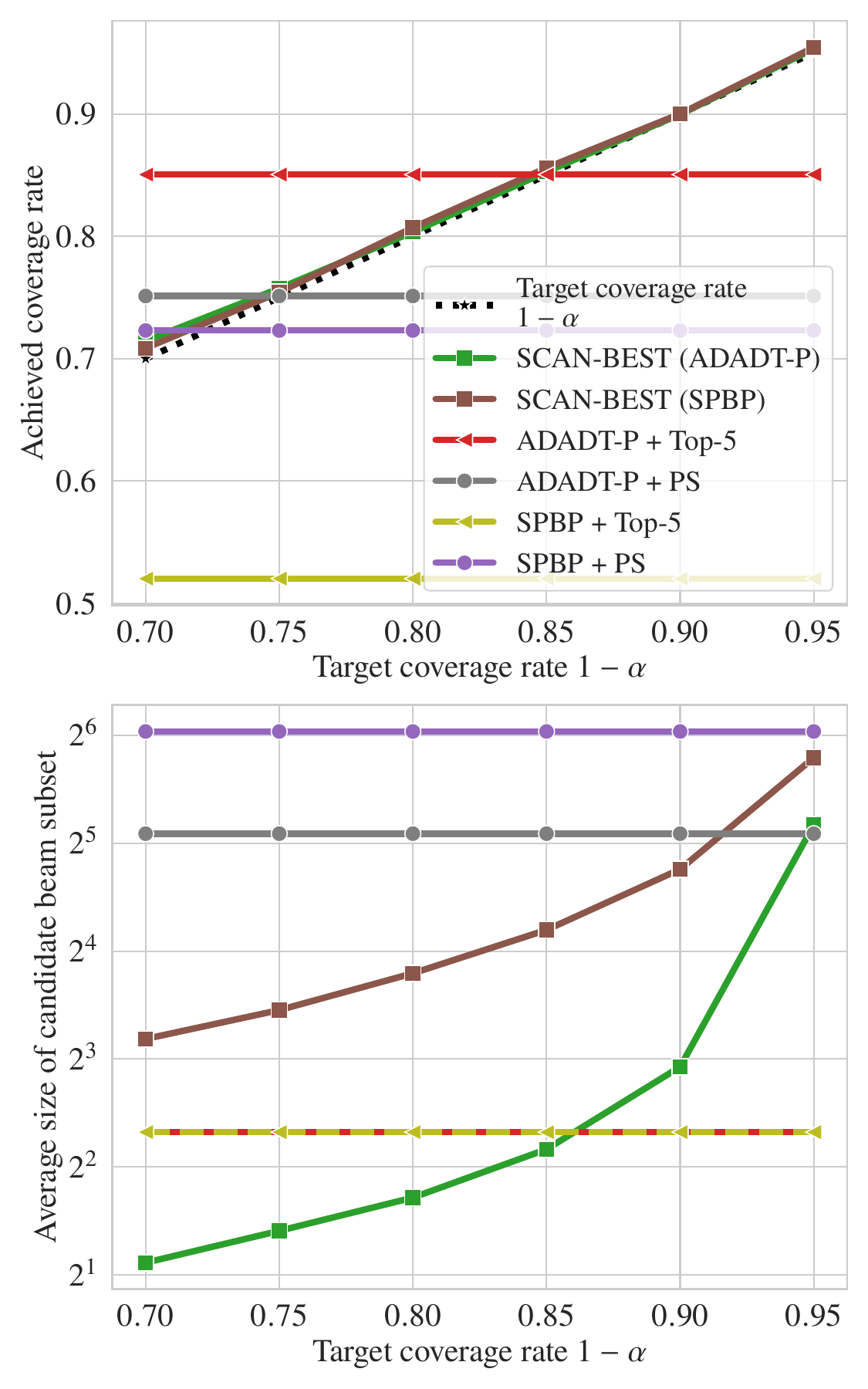}}
    \caption{Achieved coverage rates and average sizes of candidate beam subset of SCAN-BEST and baselines for $\epsilon=0.20,0.10,0.05,0.00$.}
    \label{fig:coverage_and_size:part2}
\end{figure*}
To balance performance and complexity, both the predictors ADADT-P and SPBP are implemented as 2D CNN architectures, whose parameters are provided in Tables~\ref{ADADT-P parameters} and \ref{SPBP parameters}, respectively.
Both of them are trained using the Adam optimizer with a batch size of 128, an initial learning rate of 0.0002, and the learning rate scheduler ``ReduceLROnPlateau''~\cite{noauthor_reducelronplateau_nodate}.
This scheduler automatically reduces the learning rate when the model's performance on the validation set ceases to improve or demonstrates only marginal improvements over a specified number of training epochs.
Both the maximum training epochs of ADADT-P and SPBP are set to 200, and, to prevent overfitting, a classical early stopping criterion is adopted, which stops the training process if the validation loss does not improve for a certain number of epochs. 
The probabilistic classifier $g(\cdot)$ used in the weighted CRC, is also implemented as a 2D CNN architecture with its configuration in Table~\ref{probabilistic classifier parameters}.
The optimizer, batch size, and learning rate schedule, are the same as for the predictors, and the initial learning rate is set as 0.0004.
\begin{table}[t]
    \renewcommand{\arraystretch}{0.9}
    \caption{ADADT-P Parameters}
    \label{ADADT-P parameters}
    \centering
    \setlength\tabcolsep{0.5mm}{
    \begin{tabular}{c|c|c}
    \toprule
    \makecell[c]{Input/Output \\Channel} & Layers & \makecell[c]{Convolution \\Kernels} \\
    \midrule
    \midrule
    (1,~32) & \scriptsize\makecell[c]{$\{\text{Conv2D, BatchNorm2D, ReLU}\}\times 2$,\\ MaxPool2D} & (8,~4)\\\hline
    (32,~64) & idem & (5,~3)\\\hline
    (64,~128) & idem & (5,~3)\\\hline
    (128,~256) & idem & (5,~3)\\\hline
    (256,~256) & \scriptsize Conv2D, BatchNorm2D & (3,~3)\\\hline
    (256,~128) & \scriptsize\makecell[c]{Upsample, Conv2D, BatchNorm2D, ReLU\\Conv2D, BatchNorm2D} & (3,~3)\\\hline
    (128,~64) & idem & (7,~3)\\\hline
    (64,~32) & idem & (7,~3)\\\hline
    (32,~16) & idem & (7,~3)\\\hline
    (16,~8) & idem & (7,~3)\\\hline
    (8,~1) & Conv2D & (7,~3)\\\hline
    / & Flatten, Softmax, Reshape & /\\
    \bottomrule
    \end{tabular}}
    \end{table}

    \begin{table}[t]
    \renewcommand{\arraystretch}{0.9}
    \caption{SPBP Parameters}
    \label{SPBP parameters}
    \centering
    \setlength\tabcolsep{0.5mm}{
    \begin{tabular}{c|c|c}
    \toprule
    \makecell[c]{Input/Output \\Channel} & Layers & \makecell[c]{Convolution \\Kernels} \\
    \midrule
    \midrule
    (1,~32) & \scriptsize\makecell[c]{$\{\text{Conv2D, BatchNorm2D, ReLU}\}\times 2$,\\ MaxPool2D} & (2,~4)\\\hline
    (32,~64) & idem & (2,~4)\\\hline
    (64,~128) & idem & (2,~4)\\\hline
    (128,~128) & idem & (2,~4)\\\hline
    (128,~64) & \scriptsize\makecell[c]{Upsample, Conv2D, BatchNorm2D, ReLU\\Conv2D, BatchNorm2D} & (3,~3)\\\hline
    (64,~32) & idem & (7,~3)\\\hline
    (32,~16) & idem & (7,~3)\\\hline
    (16,~8) & idem & (7,~3)\\\hline
    (8,~1) & Conv2D & (7,~3)\\\hline
    / & Flatten, Softmax, Reshape & /\\
    \bottomrule
    \end{tabular}}
        \end{table}

    \begin{table}[t]
    \linespread{1.0}\selectfont
    \caption{Probabilistic Classifier Parameters}
    \label{probabilistic classifier parameters}
    \centering
    \setlength\tabcolsep{0.5mm}{
    \begin{tabular}{c|c|c}
    \toprule
    \makecell[c]{Input/Output \\Channel} & Layers & \makecell[c]{Convolution \\Kernels} \\
    \midrule
    \midrule
    (1,~32) & \scriptsize\scriptsize\makecell[c]{$\text{Conv2D, BatchNorm2D, ReLU,}\text{MaxPool2D}$} & (3,~3)\\\hline
    (32,~64) & idem & (3,~3)\\\hline
    (64,~128) & idem & (3,~3)\\\hline
    (128,~256) & \scriptsize\scriptsize\makecell[c]{$\text{Conv2D, BatchNorm2D, ReLU,} $\\$\text{AdaptiveAvgPool2D}$} & (3,~3)\\\hline
    (256,~64) & Flatten, Linear, ReLU & /\\\hline
    (64,~1) & Linear, Sigmoid&/\\
    \bottomrule
    \end{tabular}}
    \end{table}

\par Moreover, in Fig.~\ref{fig:coverage_and_size:part2}, we present the additional results varying the suboptimality parameter $\epsilon$ value.

\subsection{Computational Complexity Analysis}
The computational complexity of the proposed SCAN-BEST framework primarily stems from two components: the deep learning-based predictor and the candidate beam subset selection based on CRC.
For the two predictors considered, ADADT-P and SPBP, their computational complexities are approximately $\mathcal{O}(N_{\text{t}}MB)$ and $\mathcal{O}(\underline{N_{\text{t}}}\underline{M}B)$, respectively, where $B$ denotes the number of various layers used in the predictor.
In the candidate beam subset selection stage, the threshold $\hat{\lambda}$ is computed offline during a calibration phase.
Therefore, the online selection only involves comparing $N_{\text{t}}S$ NSCs with $\hat{\lambda}$.
Thus, its computational complexity is approximately $\mathcal{O}(N_{\text{t}}S)$.
Therefore, SCAN-BEST (ADADT-P) and SCAN-BEST (SPBP) have computational complexities of $\mathcal{O}(N_{\text{t}}MB+N_{\text{t}}S)$ and $\mathcal{O}(\underline{N_{\text{t}}}\underline{M}B+N_{\text{t}}S)$, respectively.

\par As for the other baselines in Table~\ref{table:baseline_comparison}, the computational complexities of the Top-$K$ and PS-based selection schemes are $\mathcal{O}(K N_{\text{t}} S)$ and $\mathcal{O}(N_{\text{t}} S \log(N_{\text{t}} S))$, respectively. 
Thus, the total complexities of ADADT-P + Top-$K$, ADADT-P + PS, SPBP + Top-$K$, and SPBP + PS are $\mathcal{O}(N_{\text{t}}MB+KN_{\text{t}}S)$, $\mathcal{O}(N_{\text{t}}MB+N_{\text{t}}S\log(N_{\text{t}}S))$, $\mathcal{O}(\underline{N_{\text{t}}} \underline{M}B+KN_{\text{t}}S)$, and $\mathcal{O}(\underline{N_{\text{t}}} \underline{M}B+N_{\text{t}}S\log(N_{\text{t}}S))$, respectively.
For CSLW-NBS, its complexity is approximately $\mathcal{O}(IAN_{\text{t}}M + IAeM + I^2Ae^2)$, where $I$, $A$, and $e$ denote the number of BOMP iterations, the number of measurements, and the block size, respectively.

\par In summary, the computational complexity of SCAN-BEST is slightly lower than that of the baseline methods that combine predictors with alternative candidate beam subset selection schemes, and significantly lower than that of CSLW-NBS. 
In practical implementations, with GPU acceleration for the CNN components, the runtime of SACN-BEST is very short—and substantially shorter than that of CSLW-NBS.

\subsection{Generalization to Untrained Scenarios}
\label{app:unseen_scenario}
\begin{figure*}[t]
    \centering
    \begin{minipage}[t]{0.31\textwidth}
        \centering
        \includegraphics[width=\linewidth]{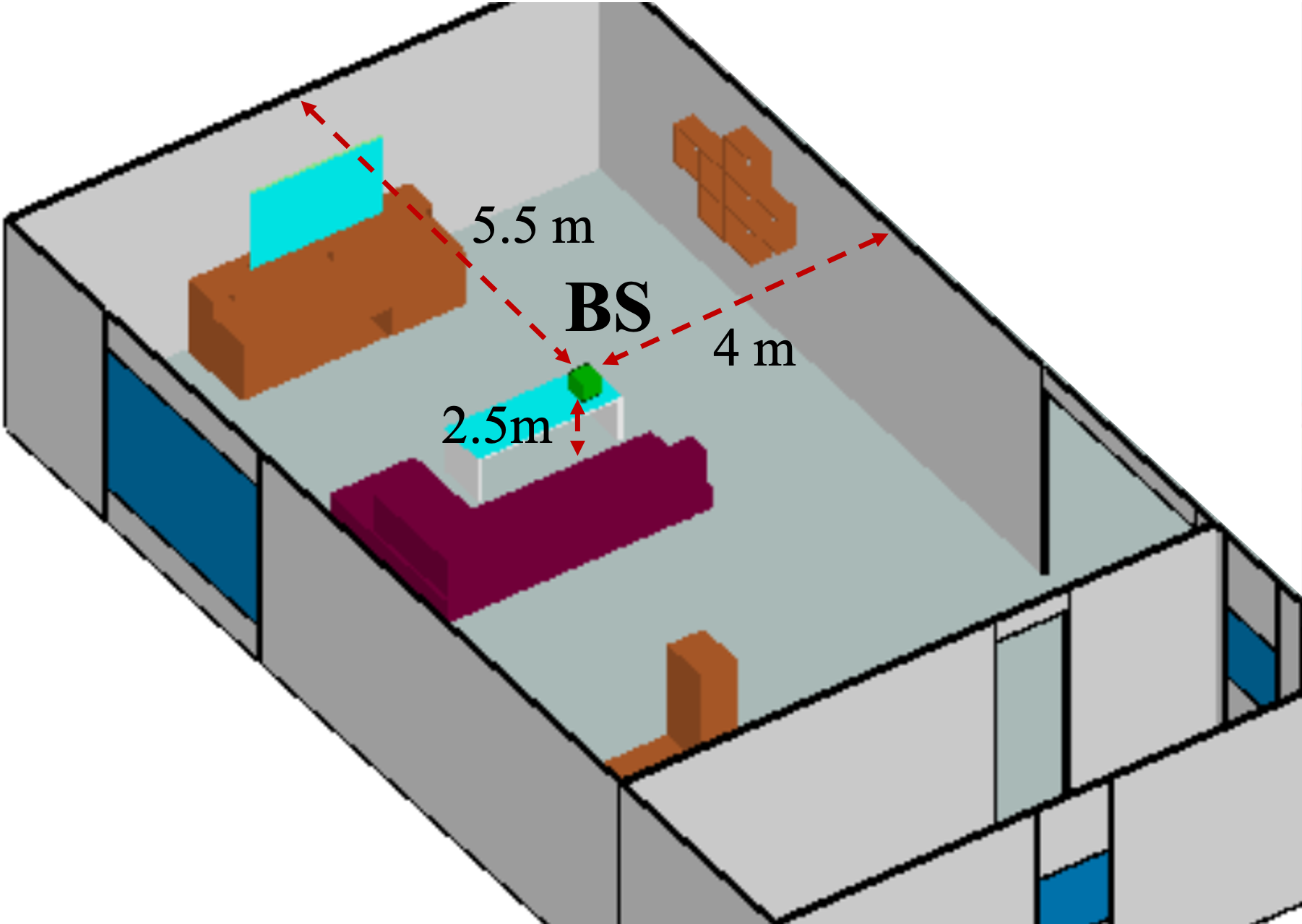}
        \caption{An illustration of indoor scenario ``I2''.}
        \label{fig:outdoor_scenario_I2}
    \end{minipage}%
    \hfill
    \begin{minipage}[t]{0.68\textwidth}
        \centering
        \includegraphics[width=\linewidth]{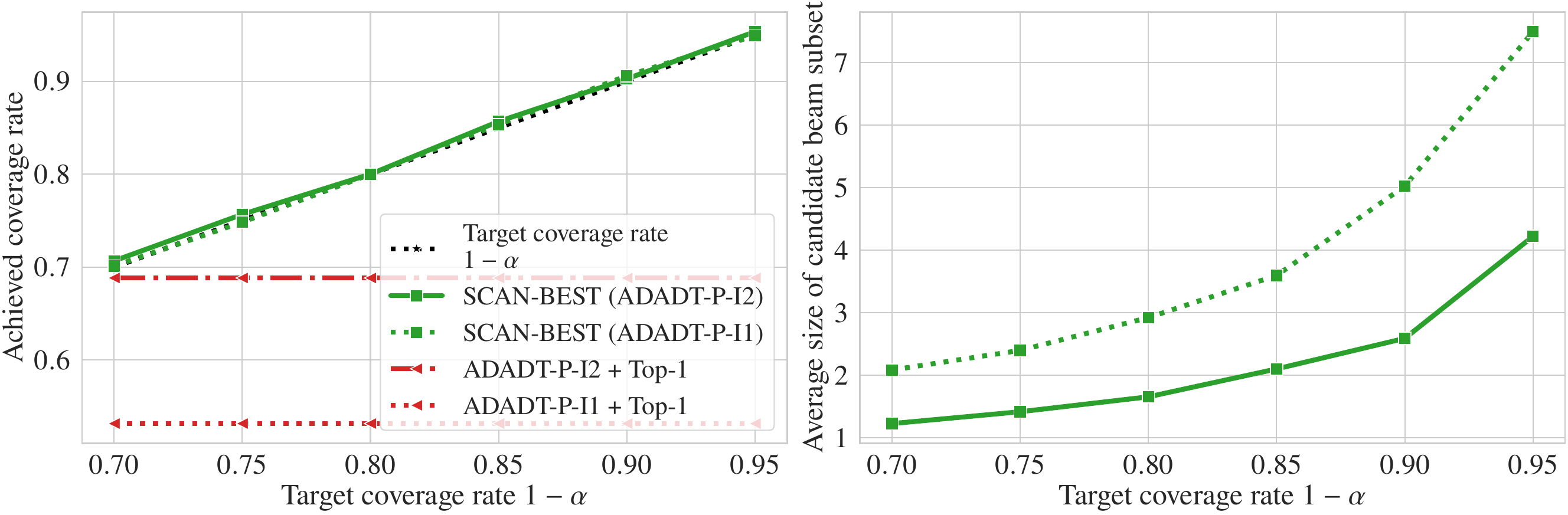}
    \caption{Achieved coverage rates and average sizes of candidate beam subset of SCAN-BEST and baselines in the untrained scenario I2.}
    \label{fig:coverage_and_size_untrained_I2}
    \end{minipage}
\end{figure*}
Here, we evaluate the generalization capability of SCAN-BEST (ADADT-P) trained under the scenario shown in Fig.~\ref{indorr simulation region}, denoted as SCAN-BEST (ADADT-P-I1), on an unseen scenario illustrated in Fig.~\ref{fig:outdoor_scenario_I2}. 
All other parameters remain consistent with Table~\ref{detailed parameters}. 
For comparison, we also train an ADADT-P model directly on the unseen scenario, referred to as ADADT-P-I2.
As shown in Fig.~\ref{fig:coverage_and_size_untrained_I2}, ADADT-P-I1 + Top-1 achieves comparable coverage rate to ADADT-P-I2 + Top-1 (77\%).
When integrated with the CRC framework, both SCAN-BEST (ADADT-P-I1) and SCAN-BEST (ADADT-P-I2) successfully achieve the desired coverage guarantees across different target coverage rates, while maintaining candidate beam sets of similar sizes.

\subsection{Extended Evaluation to Outdoor Scenarios}
\label{app:outdoor_scenario}
\begin{figure*}[t]
    \centering
    \begin{minipage}[t]{0.31\textwidth}
        \centering
        \includegraphics[width=\linewidth]{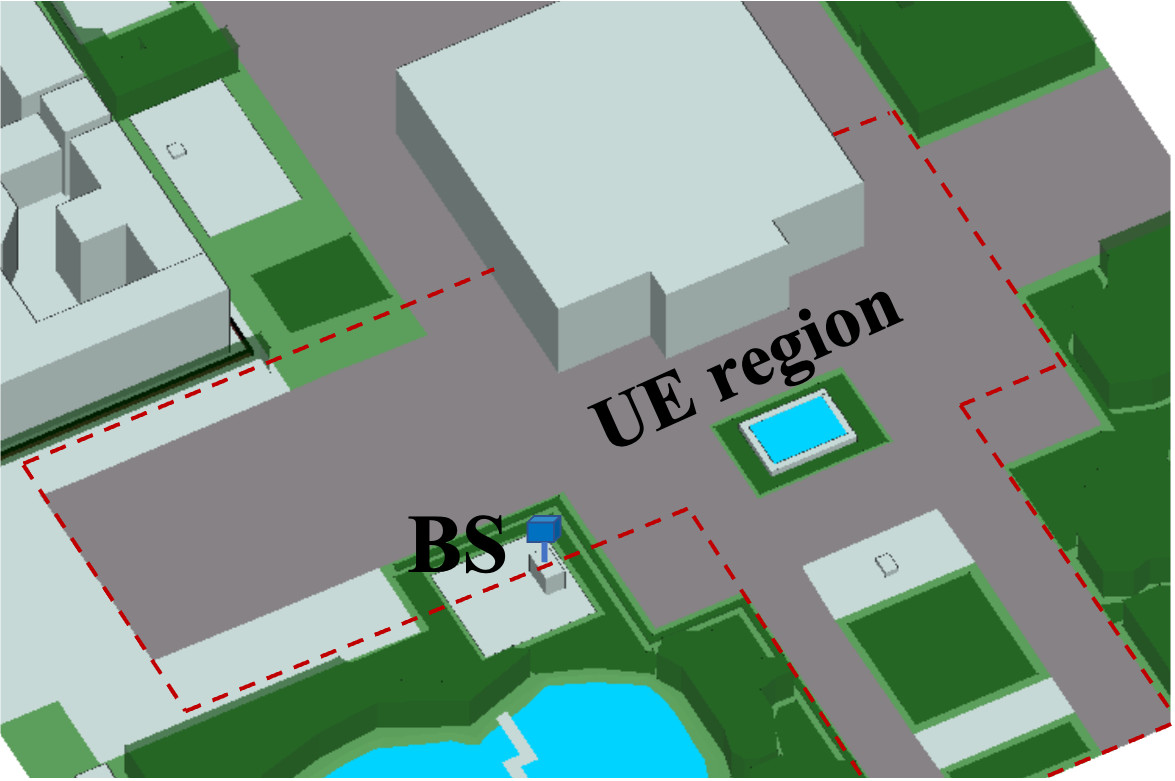}
        \caption{An illustration of outdoor scenario.}
        \label{fig:outdoor_scenario}
    \end{minipage}
    \hfill
    \begin{minipage}[t]{0.68\textwidth}
        \centering
        \includegraphics[width=\linewidth]{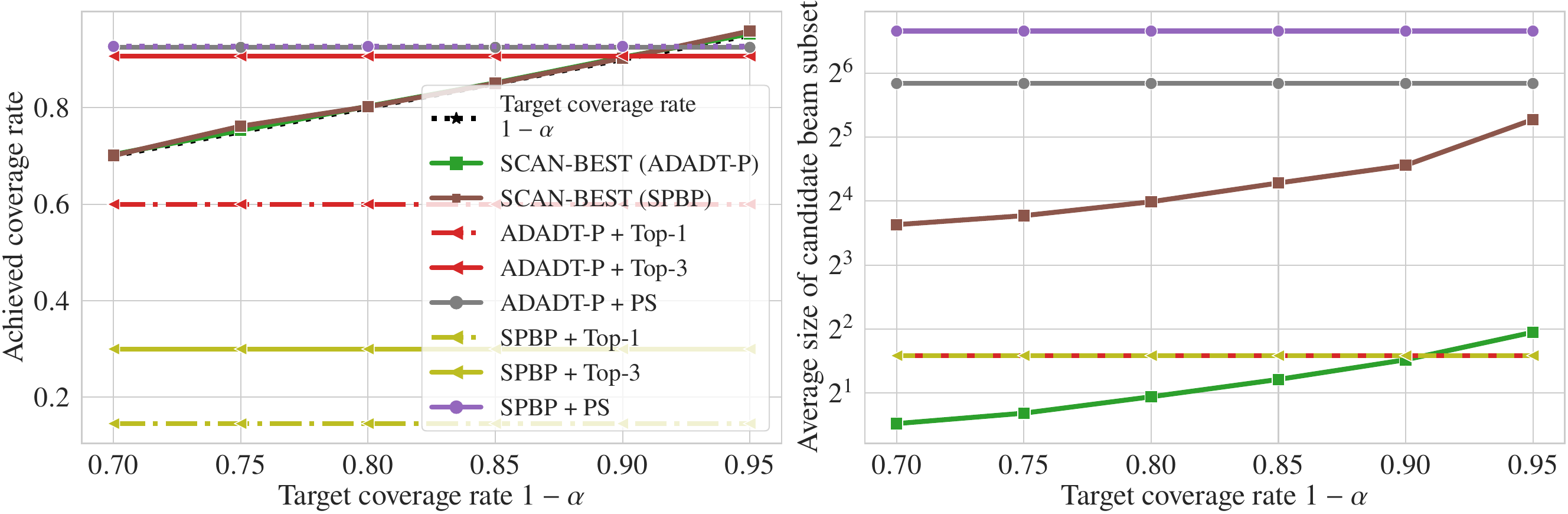}
        \caption{Achieved coverage rates and average sizes of candidate beam subset of SCAN-BEST and baselines in the outdoor scenario.}
        \label{fig:coverage_and_size_outdoor}
    \end{minipage}
\end{figure*}
Considering an outdoor scenario illustrated in Fig.~\ref{fig:outdoor_scenario}, the BS height is set to 15~m, the number of mmWave antennas at the BS is 512.
All other parameters remain consistent with those listed in Table~\ref{detailed parameters}. 
As shown in Fig.~13, both SCAN-BEST (ADADT-P) and SCAN-BEST (SPBP) efficiently achieve the desired coverage guarantees across various target coverage rates by dynamically forming their candidate beam sets, even in outdoor environment.
In contrast, the Top-$K$ and PS-based schemes fail to do so.
Moreover, the smaller candidate beam set obtained by SCAN-BEST (ADADT-P) relative to SCAN-BEST (SPBP) further demonstrates the superior efficiency of the employed ADADT.

\bibliographystyle{IEEEtran}
\bibliography{IEEEabrv,ref_simplified}
\end{document}